\documentclass[twocolumn]{IEEEtranTCOM}
\usepackage{amsmath,subfigure}
\usepackage{graphicx}
\usepackage{grffile}
\usepackage{algpseudocode}
\usepackage{algorithm}
\usepackage{epstopdf}
\usepackage{amsmath}
\usepackage{amssymb}
\usepackage{breqn}

\newtheorem{theorem}{Theorem}[section]
\newtheorem{lemma}[theorem]{Lemma}

\newtheorem{mydef}{Definition}

\newenvironment{proof}[1][Proof]{\begin{trivlist}
\item[\hskip \labelsep {\bfseries #1}]}{\end{trivlist}}

\newcommand{\qed}{\nobreak \ifvmode \relax \else
      \ifdim\lastskip<1.5em \hskip-\lastskip
      \hskip1.5em plus0em minus0.5em \fi \nobreak
      \vrule height0.75em width0.5em depth0.25em\fi}

\hyphenation{op-tical net-works semi-conduc-tor}

\begin{document}
\title{Message Passing Algorithms for Phase Noise Tracking Using
Tikhonov Mixtures}

\author{Shachar~Shayovitz,~\IEEEmembership{Student~Member,~IEEE,}
        and~Dan~Raphaeli,~\IEEEmembership{Member,~IEEE}
\thanks{S. Shayovitz and D. Raphaeli are with the Department
of EE-Systems, Tel Aviv University, Tel Aviv,
Israel, e-mail: shachars@post.tau.ac.il,danr@eng.tau.ac.il.}}  %

\markboth{IEEE Transactions on Communications}%
{Submitted paper}
\maketitle

\begin{abstract}
In this work, a new low complexity iterative algorithm for decoding data transmitted over strong phase
noise channels is presented. The algorithm is based on the Sum \& Product Algorithm (SPA) with phase noise
messages modeled as Tikhonov mixtures. Since mixture based Bayesian inference such as SPA, creates an
exponential increase in mixture order for consecutive messages, mixture reduction is necessary. We propose
a low complexity mixture reduction algorithm which finds a reduced order mixture whose dissimilarity
metric is mathematically proven to be upper bounded by a given threshold.
As part of the mixture reduction, a new method for optimal clustering provides the closest circular
distribution, in Kullback Leibler sense, to any circular mixture. We further show a method for limiting
the number of tracked components and further complexity reduction approaches. We show simulation results
and complexity analysis for the proposed algorithm and show better performance than other state of the art
low complexity algorithms.
We show that the Tikhonov mixture approximation of SPA messages is equivalent to the tracking of multiple
phase trajectories, or also can be looked as smart multiple phase locked loops (PLL). When the number of
components is limited to one the result is similar to a smart PLL.
\end{abstract}
\begin{keywords}
phase noise, factor graph, Tikhonov, cycle slip, directional
statistics, moment matching,mixture models
\end{keywords}

\IEEEpeerreviewmaketitle

\section{Introduction}
\label{sec:intro}
\IEEEPARstart{M}any high frequency communication systems operating
today employ low cost upconverters or downconverters which create
phase noise. Phase noise can severely limit the information rate of a communications
system and pose a serious challenge for the detection systems.
Moreover, simple solutions for phase noise tracking such as PLL either
require low phase noise or otherwise require many pilot symbols which
reduce the effective data rate.

In the last decade we have witnessed a significant amount of research
done on joint estimation and decoding of phase noise and coded
information. For example, \cite{barb2005} and \cite{colavolpe2006} which are based on the
factor graph representation of the joint posterior, proposed in \cite{worthen2001} and allows the design
of efficient message passing algorithms which incorporate both the code
graph and the channel graph. The use of LDPC or Turbo decoders, as
part of iterative message passing schemes, allows the receiver to
operate in low SNR regions while requiring less pilot symbols.

In order to perform MAP decoding of the code symbols, the SPA is applied to the factor graph. The SP
algorithm is a message passing algorithm which computes the exact
marginal for each code symbol, provided there are no cycles in the
factor graph. In the case of phase noise channels, the messages related to the phase are continuous, thus
recursive computation of messages requires computation of integrals which have no analytical solution and
the direct application of this algorithm is not feasible.
A possible approximation of MAP detection is to quantize the phase noise and perform an
approximated SP. The channel phase takes only a finite number of
values $L$, thus creating a trellis diagram representing the random
walk. If we suppose a forward - backward scheduling, the SPA reduces to a BCJR run on this trellis
following LDPC decoding. This algorithm (called DP - discrete phase in this
paper) requires large computational resources (large $L$) to reach
high accuracy, rendering it not practical for some real world
applications.

In order to circumvent the problem of continuous messages, many algorithms have resorted to
approximations. In \cite{colavolpe2006}, the algorithm uses channel memory truncation
rather than an explicit representation of the channel parameters. In
\cite{barb2005} section B., an algorithm which efficiently balances
the tradeoff between accuracy and complexity was proposed (called BARB
in this paper). BARB uses Tikhonov distribution parameterizations
(canonical model) for all the SPA messages concerning a phase node.
However, the approximation as defined in \cite{barb2005}, is only good
when the information from the LDPC decoder is good (high reliability).
In the first iteration the approximation is poor, and in fact
exists only for pilot symbols. The LLR messages related to the
received symbols which are not pilots are essentially zero (no
information). This inability to accurately approximate the messages in
the first iterations causes many errors and can create an error floor.
This problem is intensified when using either low code rate or high
code rate. In the first case, it is since the pilots are less
significant, since their energy is reduced. In the second case, the
poor estimation of the symbols far away from the pilots cannot be
overcome by the error correcting capacity of the code.
In order to overcome this limitation, BARB relies on the insertion of
frequent pilots to the transmitted block causing a reduction of the
information rate.

In this paper, a new approach for approximating the phase noise forward and backward messages using
Tikhonov mixtures is proposed. Since SP recursion equations create an exponential increase in the number
of mixture components, a mixture reduction algorithm is needed at each phase message calculation to keep
the mixture order small.
We have tested few state of the art clustering algorithms, and those algorithms failed for this task, and cannot provide proven accuracy. Therefore we have derived a new clustering algorithm. A distinct property of
the new algorithm is its ability to provide adaptive mixture order, while keeping specified accuracy
constraint, where the accuracy is the Kullback Leibler (KL) divergence between the original and the
clustered pdfs. A proof for the accuracy of this mixture reduction algorithm is also presented in this
paper.
We show that the process of hypothesis expansion followed by clustering is equivalent to a sophisticated
tracker which can track most of the multiple hypotheses of possible phase trajectories. Occasionally, the
number of hypotheses grows, and more options for phase trajectories emerge. Each such event causes the
tracker to create another tracking loop. In other occasions, two trajectories are merged into one. We
show, as an approximation, the tracking of each isolated phase trajectory is equivalent to a PLL and a
split event is equivalent to a point in time when a phase slip may happen.

In the second part, we use a limited order Tikhonov mixture. This limitation may cause the tracking
algorithm to lose tracking of the correct phase trajectory, and is analogous to a cycle slip in PLL. We
propose a method to combat these slips with only a slight increase in complexity. The principle operation
of the method is that each time some hypothesis is abandoned, we can calculate the probability of being in
the correct trajectory and we can use this information wisely in the calculation of the messages. We
provide further complexity reduction approaches. One of these approaches is to abandon the clustering
altogether, and replace it by component selection algorithm, which maintains the specified accuracy but
requires more components in return. Now the complexity of clustering is traded against the complexity of
other tasks. Finally, we show simulations results which demonstrate that the proposed scheme's Packet
Error Rate (PER) are comparable to the DP algorithm and that the resulting computational complexity is
much lower than DP and in fact is comparable to the algorithm proposed in \cite{barb2005}.

The reminder of this paper is organized as follows. Section II
introduces the channel model and presents the derivation of the exact
SPA  from \cite{barb2005}. In Section III, we introduce the reader to
the directional statistics framework, and some helpful results on the
KL divergence. Section IV presents the mixture order canonical
model and provides a review on mixture reduction algorithms. Section V presents two mixture reduction
algorithms for approximating the SP messages. Section VI presents the computation of LLRs. A complexity
comparison is carried out in Section VII. Finally, in
Section VIII we present some numerical results and in Section IX, we
discuss the results and point out some interesting claims.

\section{System Model}
\label{sec:system_model}
In this section we present the system model used throughout this paper. We assume a sequence of data bits
is encoded using an LDPC code and then mapped to a complex signal constellation $\mathbb{A}$ of size $M$,
resulting in a sequence of complex modulation
symbols $\mathbf{c} = (c_{0},c_{1},...,c_{K-1})$. This sequence is transmitted
over an AWGN channel affected by carrier phase noise. Since we use a long LDPC code, we can assume the
symbols are drawn independency from the constellation. The
discrete-time baseband complex equivalent channel model at the
receiver is given by:
\begin{equation}\label{sys_model}
    r_{k} = c_{k}e^{j\theta_{k}}+n_{k} \;\;\;\;  k=0,1,...,K-1.
\end{equation}
where $K$ is the length of the transmitted sequence of complex symbols.
The phase noise stochastic model is a Wiener process
\begin{equation}\label{weiner}
    \theta_{k} = \theta_{k-1} + \Delta_{k}
\end{equation}
where ${\Delta_{k}}$ is a real, i.i.d gaussian sequence with
$\Delta_{k} \sim \mathcal{N}(0,\sigma_{\Delta}^{2})$ and $\theta_{0}
\sim \mathcal{U}[0,2\pi)$. For the sake of clarity we define pilots as
transmitted symbols which are known to both the transmitter and
receiver and are repeated in the transmitted block every known number
of data symbols. We also define a preamble as a sequence of pilots in
the beginning of a transmitted block. We assume that the transmitted sequence is padded with pilot symbols
in order to bootstrap the algorithms and maintain the tracking.

\subsection{Factor Graphs and the Sum Product Algorithm}
Since we are interested in optimal MAP detection, we will use the framework defined in \cite{worthen2001},
compute the SPA equations and thus perform approximate MAP detection. The factor graph representation of
the joint posterior distribution
was given in \cite{barb2005} and is shown in Fig. \ref{fig:fg}.
\begin{figure}
  \centering
  \includegraphics[width=8cm]{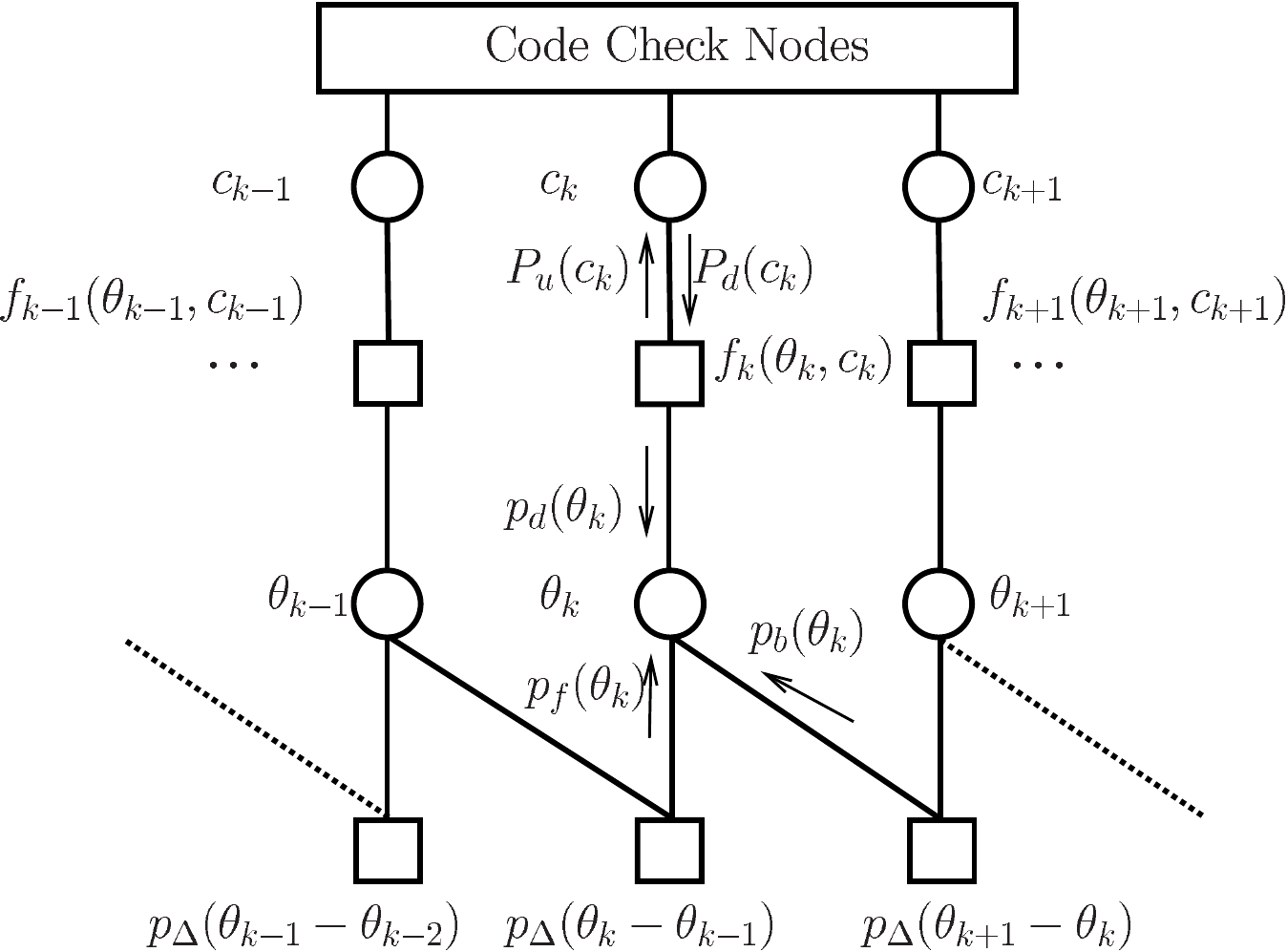}\\
  \caption{Factor graph representation of the joint posterior
distribution}\label{fig:fg}
\end{figure}
The resulting Sum \& Product messages are computed by
\begin{equation}\label{pf}
    p_{f}(\theta_{k}) =
\int_{0}^{2\pi}p_{f}(\theta_{k-1})p_{d}(\theta_{k-1})p_{\Delta}(\theta_{k}-\theta_{k-1})d\theta_{k-1}
\end{equation}
\begin{equation}\label{pb}
    p_{b}(\theta_{k}) =
\int_{0}^{2\pi}p_{b}(\theta_{k+1})p_{d}(\theta_{k+1})p_{\Delta}(\theta_{k+1}-\theta_{k})d\theta_{k+1}
\end{equation}
\begin{equation}\label{pd}
    p_{d}(\theta_{k}) = \sum_{x \in \mathbb{A}} P_{d}(c_{k}=x) e_{k}(c_{k},\theta_{k})
\end{equation}
\begin{equation}\label{Pu}
    P_{u}(c_{k}) =
\int_{0}^{2\pi}p_{f}(\theta_{k})p_{b}(\theta_{k})e_{k}(c_{k},\theta_{k})d\theta_{k}
\end{equation}
\begin{equation}\label{fk}
    e_{k}(c_{k},\theta_{k}) \propto
\exp\{-\frac{|r_{k}-c_{k}e^{j\theta_{k}}|^{2}}{2\sigma^{2}}\}
\end{equation}
\begin{equation}\label{p_del}
    p_{\Delta}(\theta_{k}) =
\sum^{\infty}_{l=-\infty}g(0,\sigma_{\Delta}^{2},\theta_{k}-l2\pi)
\end{equation}
Where $r_{k}$,$P_{d}$, $\sigma^{2}$ and
$g(0,\sigma_{\Delta}^{2},\theta)$ are the
received base band signal, symbol soft information from LDPC decoder,
AWGN variance and Gaussian distribution, respectively. The messages
$p_{f}(\theta_{k})$ and $p_{b}(\theta_{k})$ are called in this paper
the forward and backward phase noise SP messages, respectively.

The detection process starts with the channel section providing the first LLRs ($P_{u}(c_{k})$) to the
LDPC decoder, and so on. A different scheduling could be applied on a general setting, but this will not
be possible with the algorithms in this paper. Due to the fact that the phase symbols are continuous
random variables, a direct implementation of these equations is not possible
and approximations are unavoidable. Assuming enough quantization
levels, the DP algorithm can approximate the above equations as close
as we wish. However, this algorithm requires large computational
resources to reach high accuracy, rendering it not practical for some
real world applications. In \cite{shachar2012},\cite{shachar_multi2012} and \cite{shachar_old2012}, modified Tikhonov
approximations were used for the messages in the SPA which lead to a
very simple and fast algorithm. In this paper, an approximate
inference algorithm is proposed which better balances the tradeoff
between accuracy and complexity for strong phase noise channels.
\section{Preliminaries}

\subsection{Directional Statistics}
\label{sec:pagestyle}
Directional statistics is a branch of mathematics which studies random
variables defined on circles and spheres. For example, the probability
of the wind to blow at a certain direction. The \emph{circular} mean
and variance of a circular random variable $\theta$, are defined in
\cite{mardia2000}, as
\begin{equation}\label{circ_mu}
    \mu_{C} = \angle \mathbb{E} (e^{j\theta})
\end{equation}
\begin{equation}\label{circ_var}
    \sigma^{2}_{C} = \mathbb{E}(1-cos(\theta-\mu_{C}))
\end{equation}

One can see that for small angle variations around the \emph{circular}
mean, the definition of the \emph{circular} variance coincides with
the standard definition of the variance of a random variable defined
on the real axis, since $1-cos(\theta-\mu_{C}) \approx
(\theta-\mu_{C})^2$.
One of the most commonly used circular distributions is the Tikhonov
distribution and is defined as,
\begin{equation}\label{tikh_define}
   g(\theta) = \frac{e^{Re[\kappa_{g}e^{-j(\theta-\mu_{g})}]}}{2\pi
I_{0}(\kappa_{g})}
\end{equation}
According to (\ref{circ_mu}) and (\ref{circ_var}), the \emph{circular}
mean and \emph{circular} variance of a Tikhonov distribution are,
\begin{equation}\label{tikh_mu}
    \mu_{C} = \mu_{g}
\end{equation}
\begin{equation}\label{tikh_var}
    \sigma^{2}_{C} = 1-\frac{I_{1}(\kappa_{g})}{I_{0}(\kappa_{g})}
\end{equation}
where $I_{0}(x)$ and $I_{1}(x)$ are the  modified Bessel function of
the first kind of the zero and first order, respectively.
An alternative formulation for the Tikhonov pdf uses a single complex
parameter $z = \kappa_{g} e^{j\mu_{g}}$
residual phase noise in a first order PLL when the input phase noise is constant is the tikhonov distribtion

\subsection{Circular Mean \& Variance Matching}
\label{ssec:subhead}
In this section we will present a new theorem in directional statistics. The theorem states that the
nearest Tikhonov distribution, $g(\theta)$, to any circular
distribution,$f(\theta)$ (in a Kullback Liebler (KL) sense), has its
circular mean and variance matched to those of the circular
distribution .
The Kullback Liebler (KL) divergence is a common information theoretic
measure of similarity between probability distributions, and is
defined as \cite{KL1951},

\begin{equation}\label{KL1}
    D(f||g) \triangleq \int_0^{2\pi}f(\theta)\log
\frac{f(\theta)}{g(\theta)} d\theta
\end{equation}

\begin{mydef}
We define the operator $g(\theta) = \textsf{CMVM}[f(\theta)]$
(Circular Mean and Variance Matching), to take a circular pdf -
$f(\theta)$ and create a Tikhonov pdf $g(\theta)$ with the same
circular mean and variance.
\end{mydef}

\begin{theorem}
\emph{(CMVM):}
\label{mix_tikh_thr}
Let $f(\theta)$ be a circular distribution, then the Tikhonov
distribution $g(\theta)$ which minimizes $D(f||g)$ is,
\begin{equation}\label{CMVM_thr}
     g(\theta) = \textsf{CMVM}[f(\theta)]
\end{equation}
\end{theorem}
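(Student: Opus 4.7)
My plan is to exploit the fact that the Tikhonov family is an exponential family in the natural sufficient statistics $(\cos\theta,\sin\theta)$, so the KL minimization reduces to moment matching. Concretely, write $D(f\|g) = -H(f) - \int_0^{2\pi} f(\theta)\log g(\theta)\,d\theta$ and observe that only the second term depends on $g$. Substituting the Tikhonov form \eqref{tikh_define}, this second term becomes
\begin{equation}
\int_0^{2\pi} f(\theta)\log g(\theta)\,d\theta = \kappa_g\,\mathbb{E}_f[\cos(\theta-\mu_g)] - \log\bigl(2\pi I_0(\kappa_g)\bigr),
\end{equation}
so the problem is to maximize this quantity over $(\mu_g,\kappa_g)$.

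Next I would collapse the expectation onto the circular mean of $f$. Using the addition formula and setting $C=\mathbb{E}_f[\cos\theta]$, $S=\mathbb{E}_f[\sin\theta]$, $R=\sqrt{C^2+S^2}$, so that $\mu_C = \angle(C+jS)$ by \eqref{circ_mu}, I get $\mathbb{E}_f[\cos(\theta-\mu_g)] = R\cos(\mu_g-\mu_C)$. The objective therefore becomes $\kappa_g R\cos(\mu_g-\mu_C) - \log(2\pi I_0(\kappa_g))$. Differentiating in $\mu_g$ immediately yields the stationary point $\mu_g=\mu_C$ (the only maximizer since $\kappa_g,R>0$ and the function is $-\cos$-shaped in $\mu_g-\mu_C$). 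Differentiating in $\kappa_g$ at this point and using the Bessel identity $I_0'(\kappa)=I_1(\kappa)$ gives the stationarity condition
\begin{equation}
\frac{I_1(\kappa_g)}{I_0(\kappa_g)} = R.
\end{equation}

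I would then translate this condition back into the circular-variance language. By \eqref{circ_var} with $\mu_C=\angle\mathbb{E}_f(e^{j\theta})$, the circular variance of $f$ is $\sigma_C^2(f) = 1-\mathbb{E}_f[\cos(\theta-\mu_C)] = 1-R$, while by \eqref{tikh_var} the Tikhonov circular variance is $1-I_1(\kappa_g)/I_0(\kappa_g)$. Hence the stationarity condition is exactly $\sigma_C^2(g)=\sigma_C^2(f)$, and combined with $\mu_g=\mu_C$ this is precisely $g=\textsf{CMVM}[f]$.

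The one step that needs care is arguing that this stationary point is the global minimum of $D(f\|g)$ rather than a saddle. I expect this to be the main (but still routine) obstacle: it follows either by a direct second-derivative check, using strict log-convexity of $I_0(\kappa_g)$ in $\kappa_g$ (equivalently, $I_1/I_0$ is strictly increasing on $[0,\infty)$, so the $\kappa_g$ equation has a unique solution and yields a maximum of the log-likelihood), or, more abstractly, from the general fact that KL projection onto a regular exponential family is unique and characterized by matching the expectations of the sufficient statistics $\cos\theta$ and $\sin\theta$, which here is exactly matching $(C,S)$ and therefore the circular mean and variance.
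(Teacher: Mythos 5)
Your proposal is correct and follows essentially the same route as the paper's proof in the appendix: drop the entropy term, maximize the cross-entropy $\int f\log g$ over $(\mu_g,\kappa_g)$, obtain $\mu_g=\mu_C$ from the angular term and $I_1(\kappa_g)/I_0(\kappa_g)=R$ from the $\kappa_g$-stationarity, and identify these with circular mean and variance matching. Your added remark on global optimality via the strict monotonicity of $I_1/I_0$ (equivalently, uniqueness of the KL projection onto an exponential family) is a small but worthwhile tightening of a step the paper leaves implicit.
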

The proof can be found in appendix \ref{sec:cmvm_thr}.

\subsection{Helpful Results for KL Divergence}
We introduce the reader to three results related to the Kullback-Leibler Divergence which
will prove helpful in the next sections.

\begin{lemma}\label{kl_bound1}
Suppose we have two distributions, $f(\theta)$ and $g(\theta)$,
\[f(\theta) = \sum_{i=1}^{M}\alpha_{i}f_{i}(\theta)\]
\begin{equation}
     D_{KL}(\sum_{i=1}^{M}\alpha_{i}f_{i}(\theta) || g(\theta)) \leq
\sum_{i=1}^{M}\alpha_{i}D_{KL}(f_{i}(\theta) || g(\theta))
\end{equation}
\end{lemma}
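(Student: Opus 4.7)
The plan is to recognize this as the standard convexity of KL divergence in its first argument, which follows most directly from the log-sum inequality applied pointwise and then integrated. I would first recall the log-sum inequality: for nonnegative numbers $a_1,\dots,a_M$ and $b_1,\dots,b_M$,
\begin{equation*}
\sum_{i=1}^{M} a_i \log\frac{a_i}{b_i} \;\geq\; \Bigl(\sum_{i=1}^{M} a_i\Bigr)\log\frac{\sum_{i=1}^{M} a_i}{\sum_{i=1}^{M} b_i}.
\end{equation*}
This is itself a consequence of Jensen's inequality applied to the convex function $u \mapsto u\log u$, so I might either cite it or give a one-line derivation from Jensen.

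Next I would apply this pointwise in $\theta$ with the clever choice $a_i = \alpha_i f_i(\theta)$ and $b_i = \alpha_i g(\theta)$. The ratio inside the logarithm on the left collapses to $f_i(\theta)/g(\theta)$ because the $\alpha_i$ cancels, and on the right the denominator is $(\sum_i \alpha_i)\, g(\theta) = g(\theta)$ since the mixture weights sum to one. Thus, pointwise,
\begin{equation*}
\sum_{i=1}^{M} \alpha_i f_i(\theta)\log\frac{f_i(\theta)}{g(\theta)} \;\geq\; f(\theta)\log\frac{f(\theta)}{g(\theta)}.
\end{equation*}

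Finally, I would integrate both sides over $[0,2\pi)$, interchange the finite sum with the integral on the left, and recognize each integrated term as $D_{KL}(f_i\|g)$ and the right-hand side as $D_{KL}(f\|g)$. This yields the stated inequality. There is no real obstacle here; the only subtle points are remembering that $\sum_i \alpha_i = 1$ is what makes the right-hand side denominator collapse to $g(\theta)$, and handling the usual $0\log 0 = 0$ conventions where $f_i$ vanishes. If I preferred to avoid invoking the log-sum inequality by name, an equivalent two-line alternative is to write $D_{KL}(f\|g) = \mathbb{E}_f[-\log(g/f)]$ and use joint convexity of KL divergence, but the log-sum route is the cleanest.
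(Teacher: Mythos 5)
Your proof is correct and matches the approach the paper relies on: the paper does not spell out a proof but cites Runnalls (2007) and notes the argument rests on Jensen's inequality, and your pointwise log-sum inequality with $a_i=\alpha_i f_i(\theta)$, $b_i=\alpha_i g(\theta)$ is exactly that Jensen-type argument made explicit. The cancellation of the $\alpha_i$ inside the logarithm and the use of $\sum_i\alpha_i=1$ to collapse the denominator are handled correctly, so nothing is missing.
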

The proof of this bound can be found in \cite{runnalls2007} and is
based on the Jensen inequality.

\begin{lemma}\label{kl_bound2}
Suppose we have three distributions, $f(\theta)$ ,$g(\theta)$ and
$h(\theta)$. We define the following mixtures,

\begin{equation}\label{11}
   f_{1}(\theta) = \alpha f(\theta) + (1-\alpha)g(\theta)
\end{equation}

\begin{equation}\label{12}
   f_{2}(\theta) = \alpha f(\theta) + (1-\alpha)h(\theta))
\end{equation}
for $0 \leq \alpha \leq 1$

Then,

\begin{equation}
     D_{KL}(f_{1}(\theta) || f_{2}(\theta)) \leq
(1-\alpha)D_{KL}(g(\theta) || h(\theta))
\end{equation}
\end{lemma}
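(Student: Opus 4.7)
The plan is to recognize this as a direct consequence of the joint convexity of the KL divergence (equivalently, of the log-sum inequality applied pointwise). The pair $(f_1, f_2)$ is literally a convex combination of the pairs $(f, f)$ and $(g, h)$ with weights $\alpha$ and $1-\alpha$, so the bound should fall out of convexity of $(p, q) \mapsto D_{KL}(p \| q)$ on the product of distribution spaces.

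Concretely, I would first recall (or quickly verify) the log-sum inequality: for non-negative $a_1, a_2, b_1, b_2$,
\begin{equation*}
(a_1 + a_2) \log \frac{a_1 + a_2}{b_1 + b_2} \;\leq\; a_1 \log \frac{a_1}{b_1} + a_2 \log \frac{a_2}{b_2}.
\end{equation*}
I would then apply this pointwise in $\theta$ with the identification $a_1 = \alpha f(\theta)$, $b_1 = \alpha f(\theta)$, $a_2 = (1-\alpha) g(\theta)$, $b_2 = (1-\alpha) h(\theta)$. The numerator and denominator of the first ratio are equal, so $a_1 \log(a_1/b_1) = 0$, while the second term collapses to $(1-\alpha) g(\theta) \log \bigl(g(\theta)/h(\theta)\bigr)$. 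The left-hand side is exactly $f_1(\theta) \log\bigl(f_1(\theta)/f_2(\theta)\bigr)$, the integrand of $D_{KL}(f_1 \| f_2)$.

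Finally, I would integrate both sides over $\theta \in [0, 2\pi)$; the factor $(1-\alpha)$ pulls out and what remains on the right is $(1-\alpha) D_{KL}(g \| h)$, yielding the claimed bound. One should briefly note the standard conventions that handle edge cases ($0 \log 0 = 0$; the inequality degenerates trivially when $\alpha = 1$, and equals $D_{KL}(g\|h)$ when $\alpha = 0$), and verify that the pointwise application of log-sum requires no regularity beyond $f, g, h$ being densities.

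There is no real obstacle here; the only thing to be careful about is presenting the argument so that the cancellation $a_1 \log(a_1/b_1) = 0$ is visible, which is where the factor $\alpha$ disappears and why the right-hand side of the lemma involves only $(1-\alpha)$ and the divergence between the differing components $g$ and $h$. Alternatively, one could cite joint convexity of KL directly and write $(f_1, f_2) = \alpha (f, f) + (1-\alpha)(g, h)$, obtaining $D_{KL}(f_1\|f_2) \leq \alpha D_{KL}(f\|f) + (1-\alpha) D_{KL}(g\|h) = (1-\alpha) D_{KL}(g \| h)$, which is essentially the same proof packaged differently.
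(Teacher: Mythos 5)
Your proof is correct. The paper itself gives no proof of this lemma, deferring instead to the citation of Runnalls (2007); the argument there is the same in substance as yours, namely joint convexity of $(p,q)\mapsto D_{KL}(p\|q)$ applied to the decomposition $(f_1,f_2)=\alpha(f,f)+(1-\alpha)(g,h)$ (Runnalls packages it via a latent component-indicator variable and the chain rule for KL, which reduces to the same pointwise log-sum inequality you invoke). Your identification $a_1=b_1=\alpha f(\theta)$, which kills the first term and leaves only the $(1-\alpha)D_{KL}(g\|h)$ contribution, is exactly the right cancellation, and the integration step is unproblematic.
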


The proof for this identity can also be found in \cite{runnalls2007}.

\begin{lemma}\label{kl_bound3}
Suppose we have two mixtures, $f(\theta)$ and $g(\theta)$, of the same order $M$,
\[f(\theta) = \sum_{i=1}^{M}\alpha_{i}f_{i}(\theta)\]
and
\[g(\theta) = \sum_{j=1}^{M}\beta_{i}g_{i}(\theta)\]

Then the KL divergence between them can be upper bounded by,
\begin{equation}
     D_{KL}(f(\theta) || g(\theta)) \leq
D_{KL}(\alpha || \beta) + \sum_{i=1}^{M}\alpha_{i}D_{KL}(f_{i}(\theta) || g_{i}(\theta))
\end{equation}
\end{lemma}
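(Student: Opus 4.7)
The plan is to exploit the standard identity that mixtures can be viewed as the marginals of joint distributions over a mixture index and the continuous variable $\theta$, and then apply the chain rule of KL divergence together with the data-processing inequality. Equivalently, one can derive the same bound by a direct pointwise application of the log-sum inequality; I would actually favor the latter route since it keeps the argument self-contained and elementary.

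First, I would fix $\theta$ and apply the log-sum inequality to the $M$ pairs of nonnegative numbers $a_i = \alpha_i f_i(\theta)$ and $b_i = \beta_i g_i(\theta)$. This yields the pointwise bound
\begin{equation*}
f(\theta)\log\frac{f(\theta)}{g(\theta)} \;\leq\; \sum_{i=1}^{M}\alpha_i f_i(\theta)\log\frac{\alpha_i f_i(\theta)}{\beta_i g_i(\theta)}.
\end{equation*}
Next, I would integrate both sides over $[0,2\pi)$. The left-hand side is by definition $D_{KL}(f\|g)$. On the right, I split the logarithm as $\log(\alpha_i/\beta_i) + \log(f_i(\theta)/g_i(\theta))$, using the fact that each $f_i$ integrates to one, which delivers exactly $D_{KL}(\alpha\|\beta) + \sum_i \alpha_i D_{KL}(f_i\|g_i)$ and closes the argument.

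An alternative, more conceptual, path is to introduce the joint distributions $P(\theta,i) = \alpha_i f_i(\theta)$ and $Q(\theta,i) = \beta_i g_i(\theta)$ on $[0,2\pi)\times\{1,\dots,M\}$. The chain rule for KL divergence gives $D_{KL}(P\|Q) = D_{KL}(\alpha\|\beta) + \sum_i \alpha_i D_{KL}(f_i\|g_i)$ exactly, and marginalizing the index $i$ produces $f$ and $g$ on the $\theta$-component, so the data-processing inequality $D_{KL}(f\|g) \leq D_{KL}(P\|Q)$ yields the claim. I would probably mention both viewpoints, since the second makes transparent when equality holds (namely when the posterior over the index given $\theta$ is the same under $P$ and $Q$).

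I do not foresee any serious obstacle. The one point that deserves care is handling pairs with $\beta_i = 0$ or $\alpha_i = 0$: with the usual conventions $0\log 0 = 0$ and $a\log(a/0) = +\infty$ for $a>0$, the log-sum inequality and the bound remain valid (both sides become $+\infty$ on the same event, namely when some $\alpha_i>0$ while $\beta_i = 0$ on a set of positive measure). I would simply note this convention at the outset and then proceed with the pointwise inequality and integration as above.
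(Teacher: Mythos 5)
Your proof is correct and follows essentially the same route the paper points to: the paper does not write out a proof but cites a reference and notes that the argument "uses the sum log inequality," which is precisely your pointwise log-sum bound with $a_i=\alpha_i f_i(\theta)$, $b_i=\beta_i g_i(\theta)$ followed by integration and splitting of the logarithm. The alternative chain-rule/data-processing view and the remark about zero weights are correct bonuses but not departures from the intended argument.
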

where $D_{KL}(\alpha || \beta)$ is the KL divergence between the probability mass functions defined by all
the coefficients $\alpha_{i}$ and $\beta{i}$. The proof of this bound uses the sum log inequality and can
be found in \cite{Minh2003}.

\section{Tikhonov Mixture Canonical Model}
In this section we will present the Tikhonov mixture canonical model
for approximating the forward and backward phase noise SP messages. Firstly, we will give insight to the
motivation of using a mixture model for $p_{f}(\theta_{k})$ and $p_{b}(\theta_{k})$. The message, $p_{f}(\theta_{k})$, is the posterior phase distribution given the causal information $(r_{0},...,r_{k-1})$. If
we look at the (local) maximum over time we observe a phase trajectory. A phase trajectory is an
hypothesis about the phase noise process given the data. In case of zero a priori information, there will
be a $\frac{2\pi}{M}$ ambiguity in the phase trajectory, i.e. there will be $M$ parallel phase
trajectories with $\frac{2\pi}{M}$ separation between them.

Having a priori information on the data, such as preamble or pilots, can strengthen the correct hypothesis
and gradually remove wrong trajectories. However, as we get far away from the known data, more hypotheses
emerge. This dynamics is illustrated in Fig. \ref{fig:splits} where we have plotted in three dimensions
the forward phase noise messages ($p_{f}(\theta_{k})$) of the DP algorithm. The DP algorithm computes the
phase forward messages (\ref{pf}) on a quantized phase space. The axes
represent the time sample index, the quantized phase
for each symbol and the Z-axis is the posterior probability. In this figure there is only a small preamble
in the beginning and the end of the block and thus the first forward
messages are single mode Tikhonov distributions, which form a single
trajectory in the beginning of the figure and converges to a single trajectory in the end.
After the preamble, due to additive noise and phase noise, occasionally the algorithm cannot decide which
is the correct phase trajectory due to ambiguity in the symbols, thus it suggests to continue with two
trajectories each with its relative probability of occurring. This point is a split in the phase
trajectories and is analogous to a cycle slip in a PLL. If we approximate the messages at each point in
time as a a Tikhonov mixture with varying order, then each time we have a split, more components are added
to the mixture, and each time there is a merge, the number of components decreases.
This understating of the underlying structure of the phase messages is one of the most important
contributions of this paper and is the basis of the mixture model approach.

\begin{figure}
  \centering
  \includegraphics[width=8.5cm]{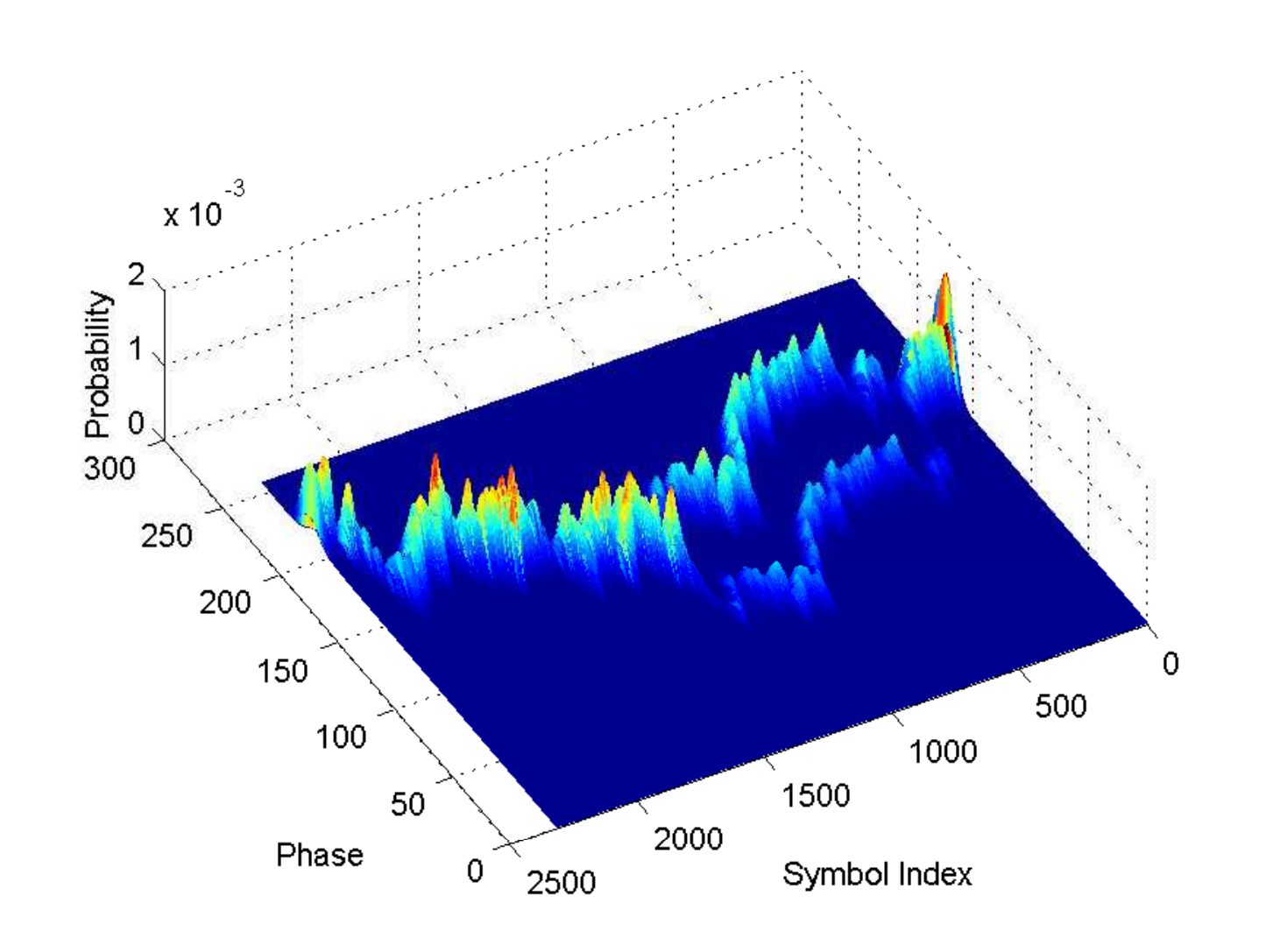}\\
  \caption{SP Phase Noise Forward Messages}\label{fig:splits}
\end{figure}

The advantage of using mixtures is in the ability to track several
phase trajectories simultaneously and provide better extrinsic
information to the LDPC decoder, which in turn will provide better
information on the code symbols to the phase estimator. In this way
the joint detection and estimation will converge quickly and avoid
error floors. However, as will be shown in a later section, the approximation of SP
messages using mixtures is a very difficult task since the mixture
order increases exponentially as we progress the phase tracking along
the received block. Therefore, there is a need for an efficient dimension reduction algorithm. In the following sections we
will propose a mixture reduction algorithm for the adaptive mixture
model. But first we will formulate the mixture reduction task
mathematically and describe algorithms which attempt to accomplish this task.

\subsection{Mixture Reduction - Problem Formulation}
As proposed above, the forward and backward messages are approximated
using Tikhonov mixtures,

\begin{equation}\label{new_pf}
    p_{f}(\theta_{k}) = \sum_{i
=1}^{N_{f}^{k}}\alpha^{k,f}_{i}t^{k,f}_{i}(\theta_{k})
\end{equation}
\begin{equation}\label{new_pb}
    p_{b}(\theta_{k}) = \sum_{i
=1}^{N_{b}^{k}}\alpha^{k,b}_{i}t^{k,b}_{i}(\theta_{k})
\end{equation}

where:
\begin{equation}\label{f_f_i}
t^{k,f}_{i}(\theta_{k}) =  \frac{e^{Re[z^{k,f}_{i}
e^{-j\theta_{k}}]}}{2\pi I_{0}(|z^{k,f}_{i}|)}
\end{equation}
\begin{equation}\label{f_b_i}
t^{k,b}_{i}(\theta_{k}) =  \frac{e^{Re[z^{k,b}_{i}
e^{-j\theta_{k}}]}}{2\pi I_{0}(|z^{k,b}_{i}|)}
\end{equation}

and,
$\alpha^{k,f}_{i}$ ,$\alpha^{k,b}_{i}$,$z^{k,f}_{i}$,$z^{k,b}_{i}$ are the
mixture coefficients and Tikhonov parameters of the forward and
backward messages of the phase sample.
If we insert approximations (\ref{new_pf}) and (\ref{new_pb}) in to
the forward and backward recursion equations (\ref{pf}) and (\ref{pb})
respectively, we get,

\begin{multline}\label{pf_eq}
    \tilde{p}_{f}(\theta_{k}) =
    \sum_{i =1}^{N_{f}^{k-1}}\int_{0}^{2\pi}\alpha^{k-1,f}_{i}t^{k-1,f}_{i}(\theta_{k-1})p_{d}(\theta_{k-1})\\ p_{\Delta}(\theta_{k}-\theta_{k-1})d\theta_{k-1}
\end{multline}
\begin{multline}\label{pb_eq}
    \tilde{p}_{b}(\theta_{k}) = \sum_{i=1}^{N_{b}^{k+1}}\int_{0}^{2\pi}\alpha^{k+1,b}_{i}t^{k+1,b}_{i}(\theta_{k+1})p_{d}(\theta_{k+1})\\ p_{\Delta}(\theta_{k+1}-\theta_{k})d\theta_{k+1}
\end{multline}

It is shown in \cite{barb2005} that the convolution of a Tikhonov and a Gaussian distributions is a

Tikhonov distribution,
\begin{equation}\label{pf_eq1}
    \tilde{p}_{f}(\theta_{k}) =
    \sum_{i =1}^{N_{f}^{k-1}}\sum_{x \in \mathbb{A}}\alpha^{k-1,f}_{i}\lambda^{k-1,f}_{i,x}\frac{e^{Re[\gamma(\sigma_{\Delta},\tilde{Z}^{k-1,f}_{i,x})e^{-j\theta_{k}}]}}{2\pi I_{0}(|\gamma(\sigma_{\Delta},\tilde{Z}^{k-1,f}_{i,x})|)}
\end{equation}
 \begin{equation}\label{pb_eq1}
    \tilde{p}_{b}(\theta_{k}) =
    \sum_{i =1}^{N_{b}^{k+1}}\sum_{x \in \mathbb{A}}\alpha^{k+1,b}_{i}\lambda^{k+1,b}_{i,x}\frac{e^{Re[\gamma(\sigma_{\Delta},\tilde{Z}^{k+1,b}_{i,x})e^{-j\theta_{k}}]}}{2\pi I_{0}(|\gamma(\sigma_{\Delta},\tilde{Z}^{k+1,b}_{i,x})|)}
\end{equation}

where
\begin{equation}\label{Z_f}
    \tilde{Z}^{k-1,f}_{i,x} = z^{k-1,f}_{i}+\frac{r_{k-1}x^{*}}{\sigma^2}
\end{equation}
\begin{equation}\label{coeff_f}
    \lambda^{k-1,f}_{i,x} = \frac{1}{A}P_{d}(c_{k-1}=x)\frac{I_{0}(|\tilde{Z}^{k-1,f}_{i,x}|)}{I_{0}(|z^{k-1,f}_{i}|)}
\end{equation}
\begin{equation}\label{Z_b}
    \tilde{Z}^{k+1,b}_{i,x} = z^{k+1,b}_{i}+\frac{r_{k+1}x^{*}}{\sigma^2}
\end{equation}
\begin{equation}\label{coeff_b}
    \lambda^{k+1,b}_{i,x} = \frac{1}{B}P_{d}(c_{k+1}=x)\frac{I_{0}(|\tilde{Z}^{k+1,b}_{i,x}|)}{I_{0}(|z^{k+1,b}_{i}|)}
\end{equation}
\begin{equation}\label{gamma}
    \gamma(\sigma_{\Delta},Z) = \frac{Z}{1+|Z|\sigma^{2}_{\Delta}}
\end{equation}
where $A$ and $B$ are a normalizing constants.

Therefore, equations (\ref{pf_eq1}) and (\ref{pb_eq1}) are Tikhonov mixtures of
order $N_{f}^{k}M$ and $N_{b}^{k}M$. Since we do not want to increase the mixture order every
symbol, a mixture dimension reduction algorithm must be derived which
captures ''most" of the information in the mixtures $\tilde{p}_{f}(\theta_{k})$ and
$\tilde{p}_{b}(\theta_{k})$, while keeping the computational complexity low. From
now on, we will present only the forward approximations, but the same
applies for the backward.

There are many metrics used for mixture reduction. The two most
commonly used are the Integral Squared Error (ISE) and the KL. The ISE
metric is defined for mixtures $f(\theta)$ and $g(\theta)$ as follows,

\begin{equation}\label{ise}
D_{ISE}(f(\theta)||g(\theta)) = \int_{0}^{2\pi}(f(\theta)-g(\theta))^{2}d\theta
\end{equation}

We chose the KL divergence for the cost function between the reduced
mixture and the original mixture rather than ISE, since the former is
expected to get better results. For example, assume a scenario where there is a low
probability isolated cluster of components, then if the reduction algorithm would prune that
cluster the ISE based cost will not be effected. However, the KL based reduction will have to assign a cluster
since the cost of not approximating it, is very high. In general, the KL divergence does not take in to account the probability of the
components while the ISE does. This feature of KL is useful since we wish to track all the significant
phase trajectories regardless of their probability.
We define the following mixture reduction task using the Kullback
Leibler divergence - \emph{Given a Tikhonov mixture $f(\theta)$ of
order $L$, find a Tikhonov mixture $g(\theta)$ of order $N$ ($L>N$),
which minimizes,}

\begin{equation}\label{obj}
D_{KL}(f(\theta)||g(\theta))
\end{equation}

\emph{where,}

\begin{equation}\label{orig_mix}
f(\theta)=  \sum_{i =1}^{L}\alpha_{i}f_{i}(\theta)
\end{equation}
\begin{equation}\label{red_mix}
g(\theta)=  \sum_{j =1}^{N}\beta_{j}g_{j}(\theta)
\end{equation}
where $f(\theta)$ is the mixture $\tilde{p}_{f}(\theta_{k})$ and the reduced order mixture $g(\theta)$
will be the next forward message, $p_{f}(\theta_{k})$.
We would like to provide an additional insight to choosing KL. The information theoretic meaning of KL
divergence is that we wish that the
loss in bits when compressing a source of probability $f(\theta)$,
with a code matched to the probability $g(\theta)$ will be not
larger than $\epsilon$. Thus, we wish to find a lower order mixture $f(\theta)$ which is a compressed
version of $f(\theta)$.

\subsection{Mixture Reduction algorithms - Review}
There is no analytical solution for (\ref{obj}), but there are many mixture reduction algorithms which
provide a suboptimal
solution for it. They can be generally classified in to two
groups, \emph{local} and \emph{global} algorithms. The
\emph{global} algorithms attempt to solve (\ref{obj}) by gradient
descent type solutions which are very computationally demanding. The \emph{local} algorithms usually start
from a large
mixture and prune out components/merge similar components, according
to some rule, until a target mixture order is reached. A very
good summary of many of these algorithms can be found in
\cite{sv2011}. The \emph{global} algorithms do not deal with KL
divergence and thus are not suited for our problem. We will review two
\emph{local} algorithms in the following section which provide the
best performance in the sense of best balancing the tradeoff between
complexity and accuracy, and show why they fail for our case.
The first algorithm is the one proposed in \cite{runnalls2007}. This
algorithm minimizes a \emph{local} problem, which \emph{sometimes} provides a
good approximation for (\ref{obj}).

Given (\ref{orig_mix}), the algorithm finds a pair of mixture components, $f_{i^{*}}$ and
$f_{k^{*}}$ which satisfy,

\begin{equation}\label{runnalls}
[i^{*},k^{*}] = arg\min_{i,k}D_{KL}(\alpha f_{i} + (1-\alpha)
f_{k}||g_{j}(\theta))
\end{equation}

where,
\begin{equation}\label{runnalls1}
g_{j}(\theta) = CMVM(\alpha f_{i} + (1-\alpha) f_{k})
\end{equation}

and $\alpha$ is normalized probability of $f_{i}$ after dividing by the
sum of the probabilities of $f_{i}$ and $f_{k}$.
The algorithm merges the two components to $g_{j}(\theta)$, thus the order of (\ref{orig_mix}) has now
decreased by one. This procedure is now repeated on the new mixture iteratively
to find another optimal pair until the target mixture order is reached.
It should be noted that the component's probability influences
the metric (\ref{runnalls}). Suppose we have two very different components, one with
high probability and another with very low probability, which is the correct hypothesis. Then the
algorithm may choose to
cluster them, and the low probability component will be lost which may be the correct trajectory.
Another algorithm is the one proposed in \cite{goldberger2004hierarchical}, which also does not directly
solve
(\ref{obj}), but defines another metric which is much easier to handle
mathematically. The algorithm's operation is very similar to the
K-means algorithm. It first chooses an initial reduced mixture
$g(\theta)$ and then iteratively performs the following,

\begin{enumerate}
  \item \emph{Select the clusters} - Map all $f_{i}$ to the $g_{j}$
which minimizes $D_{KL}(f_{i}||g_{j})$
  \item \emph{Regroup} - For all $j$, optimally cluster the elements
$f_{i}$ which were mapped to each $g_{j}$ to create the new
$g(\theta)$
\end{enumerate}

This algorithm is dependent on initial conditions in order to converge to the lowest mixture. Also, the
iterative
process increases the computational complexity significantly.
In \cite{goldberger2004hierarchical} and \cite{runnalls2007}, the
Gaussian case was considered, thus the clustering was performed using
Gaussian moment matching. For our setting, we have taken the liberty to change the moment
matching to CMVM, since we have Tikhonov distributions and not Gaussian. Note that in both algorithms, the
target order must
be defined before operation, since they have to know when to stop.
Selecting the proper target mixture order is a difficult task.
On one hand, if we choose a large target order, then the complexity
will be too high. On the other hand, if we choose the order to be low
then the algorithm may cluster components which clearly need not be
merged but since they provide the minimal KL divergence, they are
clustered.
Therefore, in order to maintain a good level of accuracy, the task
should be to guarantee an upper bound on the KL divergence and not try
to unsuccessfully minimize it. Moreover, it should be noted that in
our setting the mixture reduction task (\ref{obj}), is performed many
times and not once. Therefore, there may not be a need to have the
same reduced mixture order for each symbol. These ideas will lead us to
the approach presented in the next section of the \emph{adaptive}
mixture canonical model.

\section{A New Approach to Mixture Reduction}
We have seen that the current state of the art low complexity mixture
reduction algorithms based on a fixed target mixture order do not provide good enough approximations to
(\ref{obj}). Moreover, the choice of the mixture order plays a crucial part in the clustering task. On one
hand, a small mixture will provide poor
SP message approximation which will propagate over the factor graph
and cause a degradation in performance. On the other hand, a large
mixture order will demand too many computational resources.
Instead of reducing (\ref{pf_eq1}) and (\ref{pb_eq1}) to a fixed order,
we propose a new approach which has better accuracy while keeping low
complexity. Since we are performing Bayesian inference on a large data
block, we have many mixture reductions to perform rather than just a single reduction. Therefore, in terms
of
computational complexity, it is useful to use different mixture orders
for different symbols and look at the average number of components as
a measure of complexity. This new observation is
critical in achieving high accuracy and low PER while keeping
computational complexity low.
We define the \emph{new mixture reduction task} -
\emph{Given a Tikhonov mixture $f(\theta)$,}
\begin{equation}\label{in}
f(\theta) = \sum_{i=1}^{L}\alpha_{i}f_{i}(\theta)
\end{equation}
\emph{Find the Tikhonov mixture $g(\theta)$ with the minimum number of
components $N$}

\begin{equation}\label{out}
g(\theta) = \sum_{j=1}^{N}\beta_{j}g_{j}(\theta)
\end{equation}

\emph{which satisfy,}

\begin{equation}\label{task}
D_{KL}(f(\theta)||g(\theta)) \leq \epsilon
\end{equation}

Solving this new task will guarantee that the accuracy of the
approximation is upper bounded so we can keep the PER
levels low. Moreover, simulations show that the resulting mixtures are of very small sizes. In the
following section, we will show a low complexity algorithm which finds a mixture $g(\theta)$ whose average
number of mixture components is low.

\subsection{Mixture Reduction Algorithm}
In this section, a mixture reduction algorithm is proposed which is suboptimal in
the sense that it does not have the minimal number of components, but
finds a low order mixture which satisfies (\ref{task}), for any $\epsilon$. The
algorithm, whose details are given in pseudo-code in Algorithm 1, uses
the CMVM approach, for optimally merging a Tikhonov mixture to a
single Tikhonov distribution.

\begin{algorithm}
\caption{Mixture Reduction Algorithm}
\label{mix_redc_algo}
{\fontsize{10}{5}
\begin{algorithmic}
\State $j \gets 1$
\While{$|f(\theta)| > 0$}
\State $lead \gets argmax\{\underline{\alpha}\}$
 \For{$i = 1 \to |f(\theta)|$}
     \If {$D_{KL}(f_{i}(\theta) || f_{lead}(\theta)) \leq \epsilon $}
         \State $idx \gets [idx , i]$
    \EndIf
\EndFor
\State $\beta_{j} \gets \sum_{i\in idx}\alpha_{i}$
\State $g_{j}(\theta) \gets CMVM(\sum_{i\in
idx}\frac{\alpha_{i}}{\beta_{j}}f_{i}(\theta))$
\State $f(\theta) \gets f(\theta) - \sum_{i \in idx}{\alpha_{i}f_{i}(\theta)} $
\State Normalize $f(\theta)$
\State $j \gets j + 1$
\EndWhile
\end{algorithmic}}
\end{algorithm}

The input to this algorithm, $f(\theta)$, is the Tikhonov mixture (\ref{pf_eq1}) and the output Tikhonov
mixture $g(\theta)$ is a reduced version of
$f(\theta)$ and approximates the next forward or backward messages. Note that the function $|f(\theta)|$
outputs the number of Tikhonov components in the Tikhonov mixture $f(\theta)$. The computations of $D_{KL}(f_{i}(\theta) || f_{lead}(\theta))$ and $CMVM(\sum_{i\in idx}\frac{\alpha_{i}}{\beta_{j}}f_{i}(\theta))$ are detailed in appendices (\ref{sec:kl_comp}) and (\ref{sec:cmvm_exmp}).
In the beginning of each iteration, the algorithm selects the highest probability mixture component and
clusters it with all the components which are similar to it (KL sense). It then finds the next highest
probability component and performs the same until there are no components left to cluster. We will now
show that for any $\epsilon$, the algorithm satisfies (\ref{task}).

\begin{theorem}
\emph{(Mixture Reduction Accuracy):}
\label{mix_red_acc}
Let $f(\theta)$ be a Tikhonov mixture of order $L$ and $\epsilon$ be a real positive number. Then,
applying the Mixture Reduction Algorithm 1 to $f(\theta)$ using $\epsilon$, produces a Tikhonov mixture
$g(\theta)$, of order $N$ which satisfies,
\begin{equation}\label{mix_red_thr}
     D_{KL}(f(\theta)||g(\theta)) \leq \epsilon
\end{equation}
\end{theorem}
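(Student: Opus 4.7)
The plan is to exploit the alignment between the clusters that Algorithm~\ref{mix_redc_algo} builds and the three KL inequalities collected in Section~III.B, together with the optimality of the CMVM operator from Theorem~\ref{mix_tikh_thr}. Let $C_1,\ldots,C_N$ denote the index sets produced in successive outer iterations of the algorithm, and set $\beta_j=\sum_{i\in C_j}\alpha_i$ together with the normalized per-cluster mixture $h_j(\theta)=\beta_j^{-1}\sum_{i\in C_j}\alpha_i f_i(\theta)$. Because the removal step guarantees that $\{C_j\}$ is a partition of $\{1,\ldots,L\}$, one has
\[f(\theta)=\sum_{j=1}^N\beta_j h_j(\theta),\qquad g(\theta)=\sum_{j=1}^N\beta_j g_j(\theta),\]
so the two mixtures carry the \emph{same} weight vector $(\beta_1,\ldots,\beta_N)$.

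Step one invokes Lemma~\ref{kl_bound3}: since the two weight vectors coincide, the term $D_{KL}(\beta\|\beta)$ vanishes and we obtain
\[D_{KL}(f\|g)\le \sum_{j=1}^N \beta_j\,D_{KL}(h_j\|g_j).\]
Step two bounds each cluster-level term. By construction $g_j=\textsf{CMVM}(h_j)$, so Theorem~\ref{mix_tikh_thr} states that $g_j$ is the Tikhonov density minimizing $D_{KL}(h_j\|\cdot)$; since the leader $f_{\text{lead}_j}$ is itself Tikhonov, optimality yields $D_{KL}(h_j\|g_j)\le D_{KL}(h_j\|f_{\text{lead}_j})$. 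Step three expands this bound via Lemma~\ref{kl_bound1} (convexity of KL in its first argument):
\[D_{KL}(h_j\|f_{\text{lead}_j})\le \sum_{i\in C_j}\frac{\alpha_i}{\beta_j}\,D_{KL}(f_i\|f_{\text{lead}_j}).\]
Every index $i$ placed in $C_j$ has passed the acceptance test $D_{KL}(f_i\|f_{\text{lead}_j})\le\epsilon$, so each summand is at most $\epsilon$, the convex combination is at most $\epsilon$, and substituting back gives $D_{KL}(f\|g)\le\sum_j\beta_j\epsilon=\epsilon$.

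The only place that requires real care is step one: the bound would fail to close if the $j$-th component of $g$ carried a weight different from the aggregated cluster weight $\beta_j$, because then the leading $D_{KL}(\cdot\|\cdot)$ term in Lemma~\ref{kl_bound3} would be strictly positive and would not in general be absorbed by $\epsilon$. It is therefore essential that Algorithm~\ref{mix_redc_algo} sets $\beta_j\gets\sum_{i\in idx}\alpha_i$ and removes the clustered components from $f$ before the next iteration, which together ensure both the partition property and the matched weights. Once this structure is identified, the CMVM optimality and the two convexity-type lemmas combine to give the bound, and termination is immediate because each outer iteration contains at least the leader itself (which satisfies $D_{KL}(f_{\text{lead}}\|f_{\text{lead}})=0\le\epsilon$) and so removes at least one component from $f(\theta)$.
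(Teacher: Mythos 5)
Your proof is correct and uses exactly the same ingredients as the paper's: Lemma \ref{kl_bound1} to pass from the per-component acceptance tests to a cluster-level bound, Theorem \ref{mix_tikh_thr} to replace the leader by its CMVM merge, and Lemma \ref{kl_bound3} with matched weight vectors to assemble the global bound. The only difference is organizational --- the paper peels off one cluster at a time and recurses on the two-part split $f=\alpha_{M_0}f_{M_0}+\alpha_{M_1}f_{M_1}$, whereas you apply Lemma \ref{kl_bound3} once to the full $N$-way partition --- so the argument is essentially identical.
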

\begin{proof}
In the first iteration, the algorithm selects the highest probability mixture component of (\ref{in}) and
denotes it as $f_{lead}(\theta)$.
Let $M_{0}$, be the set of mixture components $f_{i}(\theta)$ selected for clustering,
\begin{equation}\label{rule_kl}
  M_{0} = \{f_{i}(\theta) \: | \: D_{KL}(f_{i}(\theta) || f_{lead}(\theta)) \leq \epsilon \}
\end{equation}
and $M_{1}$ be the set of mixture components which were not selected,
\begin{equation}\label{rule_kl}
  M_{1} = \{f_{i}(\theta) \: | \: D_{KL}(f_{i}(\theta) || f_{lead}(\theta)) > \epsilon \}
\end{equation}
Thus,
\begin{equation}\label{16}
   \sum_{i \in M_{0}}\frac{\alpha_{i}}{\beta_{1}}D_{KL}(f_{i}(\theta)||f_{lead}(\theta))
\leq \epsilon
\end{equation}
where,
\begin{equation}\label{15}
   \beta_{1} = \sum_{i \in M_{0}}\alpha_{i}
\end{equation}
Using Lemma (\ref{kl_bound1}),
\begin{equation}\label{new_KL2}
    D_{KL}\left(\sum_{i \in
M_{0}}\frac{\alpha_{i}}{\beta_{1}}f_{i}(\theta)||f_{lead}(\theta)\right) \leq
\epsilon
\end{equation}
The algorithm then clusters all the distributions in $M_{0}$ using CMVM,
\begin{equation}\label{17}
   g_{1}(\theta) =
CMVM\left(\sum_{i \in M_{0}}\frac{\alpha_{i}}{\beta_{1}}f_{i}(\theta)\right)
\end{equation}
then, using Theorem (\ref{mix_tikh_thr}),
\begin{multline}\label{new_KL2}
    D_{KL}\left(\sum_{i \in
M_{0}}\frac{\alpha_{i}}{\beta_{1}}f_{i}(\theta)||g_{1}(\theta)\right) \leq \\
D_{KL}\left(\sum_{i \in
M_{0}}\frac{\alpha_{i}}{\beta_{1}}f_{i}(\theta)||f_{lead}(\theta)\right)
\end{multline}
which means that,
\begin{equation}\label{new_KL3}
    D_{KL}\left(\sum_{i \in
M_{0}}\frac{\alpha_{i}}{\beta_{1}}f_{i}(\theta)||g_{1}(\theta)\right) \leq
\epsilon
\end{equation}
We can rewrite the mixtures $f(\theta)$ and $g(\theta)$ in the following way,
\begin{equation}\label{new_f}
    f(\theta) = \alpha_{M_{0}}f_{M_{0}}(\theta) + \alpha_{M_{1}}f_{M_{1}}(\theta)
\end{equation}
\begin{equation}\label{new_g}
    g(\theta) = \beta_{1}g_{1}(\theta) + (1-\beta_{1})h(\theta)
\end{equation}
where,
\begin{equation}\label{thr_2}
    \alpha_{M_{0}} = \sum_{i \in M_{0}}\alpha_{i}
\end{equation}
\begin{equation}\label{thr_2_1}
    \alpha_{M_{1}} = \sum_{i \in M_{1}}\alpha_{i}
\end{equation}
\begin{equation}\label{thr_3}
    f_{M_{i}}(\theta) = \sum_{j \in M_{i}}\frac{\alpha_{j}}{\alpha_{M_{i}}}f_{j}(\theta)
\end{equation}
Using (\ref{15}),
\begin{equation}\label{thr_3}
    \alpha_{M_{i}} = \beta_{i}
\end{equation}
Therefore (\ref{new_f}) and (\ref{new_g}) are two mixtures of the same size and have exactly the same
coefficients, thus the KL of the probability mass functions induced by the coefficients of both mixtures
is zero. Using Lemma (\ref{kl_bound3}),
\begin{multline}\label{new_KL4}
   D_{KL}(f(\theta) || g(\theta)) \leq \beta_{1}D_{KL}(f_{M_{0}}(\theta) || g_{1}(\theta) \\ + (1 -\beta_{1})D_{KL}(f_{M_{1}}(\theta) || h(\theta))
\end{multline}
using (\ref{new_KL2}) we get,
\begin{multline}\label{new_KL5}
   D_{KL}(f(\theta) || g(\theta)) \leq \beta_{1}\epsilon \\ + (1 - \beta_{1})D_{KL}(f_{M_{1}}(\theta) ||h(\theta))
\end{multline}
If we find a Tikhonov mixture $h(\theta)$ ,which satisfies,
\begin{equation}\label{new_KL7}
D_{KL}(f_{M_{1}}(\theta) || h(\theta)) \leq \epsilon
\end{equation}
then we will prove the theorem. But (\ref{new_KL7}) is exactly the same as the original problem, thus
applying the same clustering steps as described earlier on the new mixture $f_{M_{1}}(\theta)$ will
ultimately satisfy,
\begin{equation}\label{new_KL6}
   D_{KL}(f(\theta) || g(\theta)) \leq \epsilon
\end{equation}
\end{proof}

\subsection{Mixture Reduction As Phase Noise Tracking}
\label{sec:phase_tracking}
Recall in Fig. \ref{fig:splits}, that the phase noise messages can be
viewed as multiple separate phase trajectories, then the mixture reduction
algorithm can be viewed as a scheme to map the different mixture components to
different phase trajectories. The mixture reduction algorithm receives
a mixture describing the next step of all the trajectories and assigns it to a specific trajectory, thus
we are
able to accurately track all the hypotheses for all the phase trajectories.
Assuming slowly varying phase noise and high SNR, the mixture reduction tracking loop $i$, $\hat{\theta}^{i}_{k}$ for each trajectory can be computed in the following manner,
\begin{equation}\label{pll_equiv_op}
    \hat{\theta}^{i}_{k} =  \hat{\theta}^{i}_{k-1} +
\frac{|r_{k-1}||c_{t}|}{G_{k-1}\sigma^2}(\angle{r_{k-1}}+\angle{c_{t}}-\hat{\theta}^{i}_{k-1})
\end{equation}
where, $c_{t}$ and $G_{k-1}$ are a soft decision of the constellation
symbol and the inverse conditional MSE for $\hat{\theta}_{k-1}$,
respectively. The proof for this claim is provided in appendix \ref{sec:pll}.
Thus the mixture reduction is equivalent to multiple soft decision first order PLLs with adaptive loop
gains. Whenever the mixture components of the SPA message become too far apart, a split occurs and
automatically the number of tracking loops increases in order to track the new trajectories.

\subsection{Limited Order Adaptive Mixture}
In the previous section, we have presented an algorithm which
\emph{adaptively} changes the canonical model's mixture order, with no
upper bound. This enabled us to track \emph{all} the significant phase
trajectories in the SP messages. However, there may be scenarios with limited complexity, in which we are
forced
to have a limited number of mixture components, thus we can track only a limited number of phase
trajectories. If the number of significant phase trajectories is larger than the
maximum number of mixture components allowed, then we might miss the correct
trajectory. For example, if we limit the number of tracked trajectories to one, we get an algorithm very
close to a PLL. In this case whenever a split event occurs, we have to choose one of the trajectories and
abandon the other and in case we chose the wrong one, we experience a cycle slip. Analogously we can call
cycle slip the event of missing the right trajectory even when more than one trajectory is available.
In this section, assuming pilots are present, we propose an improvement to Algorithm 1, which provides a
solution to the missed trajectories problem. The improved algorithm still uses a mixture canonical
model for the approximation of messages in the SPA but with an additional
variable $\phi^{f}_{k}$ (for backward recursions $\phi^{b}_{k}$ ), which approximates, online, the
probability that
the tracked trajectories include the correct one. This approach enables us to track phase
trajectories while maintaining a level of their confidence. We apply the previously used clustering based
on the KL
divergence in order to select which of the components of the mixture
are going to be approximated by a Tikhonov mixture, while the rest of
the components will be ignored, but their total probabilities will be accumulated. We then use pilot
symbols and $\phi^{f}_{k}$ in order to regain tracking if a cycle slip has occurred. This approach proves
to be robust to
phase slips and provides a high level of accuracy while keeping a low
computational load. The resulting algorithm was shown, in simulations,
to provide very good performance in high phase noise level and very
close to the performance of the optimal algorithm even for
mixtures of order 1,2 and 3.
\newline

\subsubsection{Modified Reduction Algorithm}
We denote the modification of Algorithm 1 for limited complexity, as Algorithm 2. This algorithm selects
\textbf{some} components from a Tikhonov mixture, $f(\theta)$ and clusters them to an output Tikhonov
mixture $g(\theta)$ of maximum order $L$.
\begin{algorithm}
\caption{Modified Mixture Reduction Algorithm}
\label{mix_redc_algo}
{\fontsize{10}{5}
\begin{algorithmic}
\State $j \gets 1$
\While{$j \leq L$  or  $|f(\theta)| > 0$}
\State $lead \gets argmax\{\underline{\alpha}\}$
 \For{$i = 1 \to |f(\theta)|$}
     \If {$D_{KL}(f_{i}(\theta) || f_{lead}(\theta)) \leq \epsilon $}
         \State $idx \gets [idx , i]$
    \EndIf
\EndFor
\State $\beta_{j} \gets \sum_{i\in idx}\alpha_{i}$
\State $g_{j}(\theta) \gets CMVM(\sum_{i\in
idx}\frac{\alpha_{i}}{\beta_{j}}f_{i}(\theta))$
\State $f(\theta) \gets f(\theta) - \sum_{i \in idx}{\alpha_{i}f_{i}(\theta)} $
\State Normalize $f(\theta)$
\State $j \gets j + 1$
\EndWhile
\State $\phi^{f}_{k} \gets (\sum_{j}\beta_{j})\phi^{f}_{k-1}$
\end{algorithmic}}
\end{algorithm}
We initialize $\phi^{f}_{0} = 1$, which means that in the first received sample, for the forward
recursion, there is no cycle slip. Note that Algorithm 2, is identical to Algorithm 1 apart for the
computation of $\phi^{f}_{k}$.
For each iteration, Algorithm 2, selects the most probable component in (\ref{pf_eq1}) and clusters all
the mixture components similar to it. The algorithm then removes this cluster and finds another cluster
similarly. When there are no more components in $f(\theta)$ or the maximum allowed mixture order is
reached, the algorithm computes $\phi^{f}_{k}$. As discussed earlier, this variable represents the
probability that a cycle slip has not occurred. The algorithm sums up the probabilities of the clustered
components in $f(\theta)$ and multiplies that with $\phi^{f}_{k-1}$ to get $\phi^{f}_{k}$. Suppose we have
clustered all the components in $f(\theta)$, then $\phi^{f}_{k-1}$ will be equal to  $\phi^{f}_{k}$. That
suggests that the probability that a cycle slip has occurred before sample $k-1$ is the same as for sample
$k$. This is in agreement with the fact that no trajectories were ignored at the reduction from $k-1$ to
$k$. For low enough $\epsilon$ , $\phi^{f}_{k}$ is a good approximation of that probability.
\newline

\subsubsection{Recovering From Cycle Slips}
In this section, we propose to use $\phi^{f}_{k-1}$, the probability that a cycle has not occurred, and
the information conveyed by pilots in order to combat cycle slips. In case of a cycle slip, the phase
message estimator based on the tracked trajectories is useless and we need to find a better estimation of
the phase message. We propose to estimate the message using \textbf{only} the pilot symbol, $p_{d}(\theta_{k-1})$. However, if a cycle slip has not occurred, then estimating the phase message based
\textbf{only} on the pilot symbol might damage our tracking. Therefore, once a pilot symbol arrives, we
will \textbf{average} the two proposed estimators according to $\phi^{f}_{k-1}$,
\begin{equation}\label{avg_cycle}
    q_{f}(\theta_{k-1}) = \phi^{f}_{k-1}p_{f}(\theta_{k-1})+(1-\phi^{f}_{k-1})\frac{1}{2\pi}
\end{equation}

If a cycle slip has occurred and $\phi^{f}_{k-1}$ is low, then the pilot will, in high probability,
correct the tracking. We present the proposed approach in pesudo-code in Algorithm (\ref{cycle_slip_algo}).

\begin{algorithm}
\caption{Forward Message Computation with Cycle Slip Recovery}
\label{cycle_slip_algo}
{\fontsize{10}{5}
\begin{algorithmic}
\State $p_{f}(\theta_{0}) \gets \frac{1}{2\pi}$
\State $\phi^{f}_{0} \gets 1$
\State $k \gets 1$
\While{$k \leq K$}
    \State Compute $p_{d}(\theta_{k-1})$
     \If {$c_{k-1}$ is a pilot}
         \State $ q_{f}(\theta_{k-1}) \gets \phi^{f}_{k-1}p_{f}(\theta_{k-1})+(1-\phi^{f}_{k-1})\frac{1}{2\pi}$
         \State $t \gets 1$
     \Else
        \State $ q_{f}(\theta_{k-1}) \gets  p_{f}(\theta_{k-1}) $
        \State $t \gets \phi^{f}_{k-1}$
     \EndIf
     \State $\tilde{p}_{f}(\theta_{k}) \gets \int_{0}^{2\pi}q_{f}(\theta_{k-1})p_{d}(\theta_{k-1})p_{\Delta}(\theta_{k}-\theta_{k-1})d\theta_{k-1}$
     \State $[p_{f}(\theta_{k}),\phi^{f}_{k}] \gets Algorithm2(\tilde{p}_{f}(\theta_{k}),t)$
     \State $k \gets k + 1$
\EndWhile
\end{algorithmic}}
\end{algorithm}

\section{Computation of $P_{u}(c_{k})$}
As discussed in section (\ref{sys_model}), after computing the forward and backward messages, the next
step of the SP algorithm is to compute $P_{u}(c_{k})$. These messages describe the LLR of a code symbol
based on the channel part of the factor graph. These messages are sent to the LDPC decoder and the correct
approximation of these messages is crucial for the decoding of
the LDPC.
When using Algorithm 1 for the computation of the forward and backward messages, we use the reduced
mixtures with (\ref{Pu}) and analytically compute the message. However, when using a limited order mixture
and Algorithm 2 with the cycle slip recovery method in Algorithm 3, we use $\phi^{f}_{k}$ and $\phi^{b}_{k}$ in order to better the estimation of the messages. Thus $P_{u}(c_{k})$ is a weighted summation of
four components which can be interpreted as conditioning on the probability that a phase
slip has occurred for each recursion (forward and backward). This will
ensure that the computation of $P_{u}(c_{k})$ is based on the most
reliable phase posterior estimations, even if a phase slip has
occurred in a single recursion (forward or backward).
We insert the mixture (\ref{avg_cycle}) into (\ref{Pu}),
\begin{equation}\label{Pu_new1}
    P_{u}(c_{k}) \propto \int_{0}^{2\pi}q_{f}(\theta_{k})q_{b}(\theta_{k})e_{k}(c_{k},\theta_{k})d\theta_{k}
\end{equation}

where $q_{f}(\theta_{k})$ and $q_{b}(\theta_{k})$ are defined in Algorithm 3.
We decompose the computation to a summation of four components,
\begin{equation}\label{Pu_new}
    P_{u}(c_{k}) \propto A + B + C + D
\end{equation}
where
\begin{equation}\label{A}
    A = \phi^{f}_{k}\phi^{b}_{k}\int_{0}^{2\pi}p_{f}(\theta_{k})p_{b}(\theta_{k})e_{k}(c_{k},\theta_{k})d\theta_{k}
\end{equation}
\begin{equation}\label{B}
    B = \phi^{f}_{k}(1-\phi^{b}_{k})\int_{0}^{2\pi}p_{f}(\theta_{k})\frac{1}{2\pi}e_{k}(c_{k},\theta_{k})d\theta_{k}
\end{equation}
\begin{equation}\label{C}
    C = (1-\phi^{f}_{k})\phi^{b}_{k}\int_{0}^{2\pi}\frac{1}{2\pi}p_{b}(\theta_{k})e_{k}(c_{k},\theta_{k})d\theta_{k}
\end{equation}
\begin{equation}\label{D}
    D = (1-\phi^{b}_{f})(1-\phi^{b}_{k})\int_{0}^{2\pi}\frac{1}{2\pi}\frac{1}{2\pi}e_{k}(c_{k},\theta_{k})d\theta_{k}
\end{equation}
We will detail the computation of $A$, but the same applies to the
other components of (\ref{Pu_new}). We use the mixture form defined in (\ref{new_pf}) and (\ref{new_pb}).

We define the following,
\begin{equation}\label{Z_mix}
    Z_{\psi} = z^{k,f}_{i}+z^{k,b}_{j}+\frac{r_{k}c^{*}_{k}}{\sigma^{2}}
\end{equation}
and get,
\begin{equation}\label{19}
   A = \sum_{i=1}^{N_{f}^{k}}\sum_{j=1}^{N_{b}^{k}}\alpha^{k,f}_{i}\alpha^{k,b}_{j}\frac{I_{0}(|Z_{\psi}|)}{2\pi
I_{0}(|z^{k,f}_{i}|)I_{0}(|z^{k,b}_{j}|)}
\end{equation}

When implementing the algorithm in log domain, we can simplify (\ref{19}), by using (\ref{assumps3}),
\begin{multline}\label{20}
   \log\left(\frac{I_{0}(|Z_{\psi}|)}{2\pi
I_{0}(|z^{k,f}_{i}|)I_{0}(|z^{k,b}_{j}|)}\right) \approx |Z_{\psi}| - |z^{k,f}_{i}| - |z^{k,b}_{j}| \\ -\frac{1}{2}\log\left(\frac{|Z_{\psi}|}{|z^{k,f}_{i}||z^{k,b}_{j}|}\right)
\end{multline}
and for large enough $|z^{k,f}_{i}|$ and $|z^{k,b}_{j}|$
\begin{equation}\label{21}
   \log\left(\frac{I_{0}(|Z_{\psi}|)}{2\pi
I_{0}(|z^{k,f}_{i}|)I_{0}(|z^{k,b}_{j}|)}\right) \approx |Z_{\psi}| - |z^{k,f}_{i}| - |z^{k,b}_{j}|
\end{equation}

\section{Complexity}
In this section we will detail the computational complexity of the proposed algorithms and compare the
complexity to the DP and BARB algorithms. Since the mixture order changes between symbols and LDPC
iterations, we can not give an exact expression for the computational complexity. Therefore, in order to
assess the complexity of the algorithms, we denote the average number of components in the canonical model
per sample, as $\gamma(i)$, where $i$ is the index of the LDPC iteration. $\gamma(i)$, decreases in
consecutive LDPC iterations due to the fact that the LDPC decoder provides better soft information on the
symbols thus resolving ambiguities and decreasing the required number of components in the mixture. This
value, $\gamma(i)$, depends mainly on the number of ambiguities that the phase estimation algorithm
suffers. These ambiguities are a function of the SNR, phase noise variance and algorithmic design
parameters such as the number of LDPC iteration, KL threshold - $\epsilon$ and the pilot pattern.

The significant difference in computational complexity between the DP and the mixture based algorithms
stems from the fact that multi modal SPA messages are not well characterized by a single Tikhonov and the
DP algorithm must use many quantization levels to accurately describe them. However, the mixture algorithm
is successful in characterizing these messages using few mixture parameters and this difference is very
significant as the modulation order increases.
The mixture algorithm starts out by approximating the forward and backward messages using Tikhonov
mixtures. These mixtures are then inserted in to (\ref{pf}) and (\ref{pb}) to produce larger mixtures
(\ref{pf_eq1}) and (\ref{pb_eq1}). Next, the mixture reduction scheme produces a reduced mixture which is
used to compute $P_{u}(c_{k})$. On average, for a given LDPC iteration $i$, the forward message, $p_{f}(\theta_{k})$, is a Tikhonov mixture of order $\gamma(i)$. After applying (\ref{pf}), the mixture
increases to order $M\gamma(i)$ and is sent to the mixture reduction algorithm. Also on average, the
clustering algorithm performs $\gamma(i)$ clustering operations on $M$ components. The clustered mixtures
are then used to compute $P_{u}(c_{k})$ which is a multiplication of the forward and backward mixtures.
In appendices (\ref{sec:cmvm_exmp}) and (\ref{sec:kl_comp}), we have described the computation of the KL
divergence, $D_{KL}(f_{i}(\theta) || f_{lead}(\theta))$ and the application of the CMVM operator on the
clustered components - $g_{j}(\theta) \gets CMVM(\sum_{i\in idx}\frac{\alpha_{i}}{\beta_{j}}f_{i}(\theta))$.
In order to further reduce the complexity of the proposed algorithm, the variables representing
probabilities are stored in log domain and summation of these variables is approximated using the $\max$
operation. We also use the fact that for large $x$, $\log(I_{0}(x)) \approx x$ and approximate the KL
divergence in (\ref{kl_comp_full}) as,
\begin{equation}\label{kl_comp_approx}
  D_{KL} \approx |z_{2}|(1-cos(\angle z_{1} - \angle z_{2}))
\end{equation}
There is an option to abandon the clustering altogether, and replace it by a component selection algorithm,
which maintains the specified accuracy but requires more components in return. Now the complexity of
clustering is traded against the complexity of other tasks. The selection algorithm is a simple
modification in the algorithm. Instead of using CMVM to cluster several close components, we simply choose
$f_{lead}(\theta)$ as the result of the clustering.
Recalling (\ref{new_KL2}), we note that $f_{lead}(\theta)$ satisfies the accuracy condition and Theorem
\ref{mix_red_acc} still holds. Thus we will not suffer degradation in maximum error if we use this approximation
and not CMVM. However, the mean number of mixture components will increase since we do not perform any
clustering. The CMVM operator actually reduces the KL divergence between the original mixture and the
reduced mixture to much less than $\epsilon$. Therefore, when using CMVM, the reduced mixture is much
smaller than needed to satisfy the accuracy condition. In order to get the same performance with the
reduced algorithm, we need to decrease $\epsilon$ and use more components.
The reduced complexity is summarized in Table \ref{tab:model_complexity_reduced}, and compared to DP and
BARB. $Q$ is the number of quantization levels per constellation symbol in the DP algorithm. We only count
multiplication and LUT operations since they are more costly than additions. We assume that the cosine
operation is implemented using a look up table.
\begin{table*}[!ht]
\caption{Computational load per code symbol per iteration for M-PSK
constellation}  

\centering  
\begin{tabular}{p{1cm}|p{4cm}|p{2cm}|p{3cm}}                       
  &DP & BARB & Limited Order
\\
\hline\hline             
MULS & $4Q^{2}M^{2}+2M^{2}Q + 6MQ + M$ & $7M + 5$ & $4M\gamma(i)^{2}+2M(\gamma(i)+1)$\\
& & &\\
\hline
LUT & $QM$ & $3M$ &$3M\gamma(i)^{2}-\gamma(i)(2M-1)$\\
& & &\\
\hline                          
\end{tabular}
\label{tab:model_complexity_reduced}
\end{table*}

\section{Numerical Results}
In this section, we analyze the performance of the algorithms proposed in this paper. The performance
metrics of a decoding scheme is comprised of two parameters - the Packet/Bit Error Rate (PER/BER) and the
computational complexity. We use the DP algorithm as a benchmark for the lowest achievable PER and the
algorithm proposed in
\cite{barb2005}, denoted before as BARB as a benchmark for a state of the art low complexity scheme. The
phase noise model used in all the simulations is a Wiener process and the DP algorithm was simulated using
16 quantization levels between two constellation points. Also, note that the simulation results presented
in this paper use an MPSK constellation but the algorithm can also be applied, with small changes, to QAM
or any other constellation.

In Fig. \ref{fig:ber8psk} and \ref{fig:per8psk}, we show the BER and PER results for an 8PSK constellation
with an LDPC code of length 4608 with code rate 0.89. We chose $\sigma_{\Delta}=0.05$[rads/symbol] and a
single pilot was inserted every 20 symbols.

\begin{figure}
  \centering
  \includegraphics[width=8.4cm]{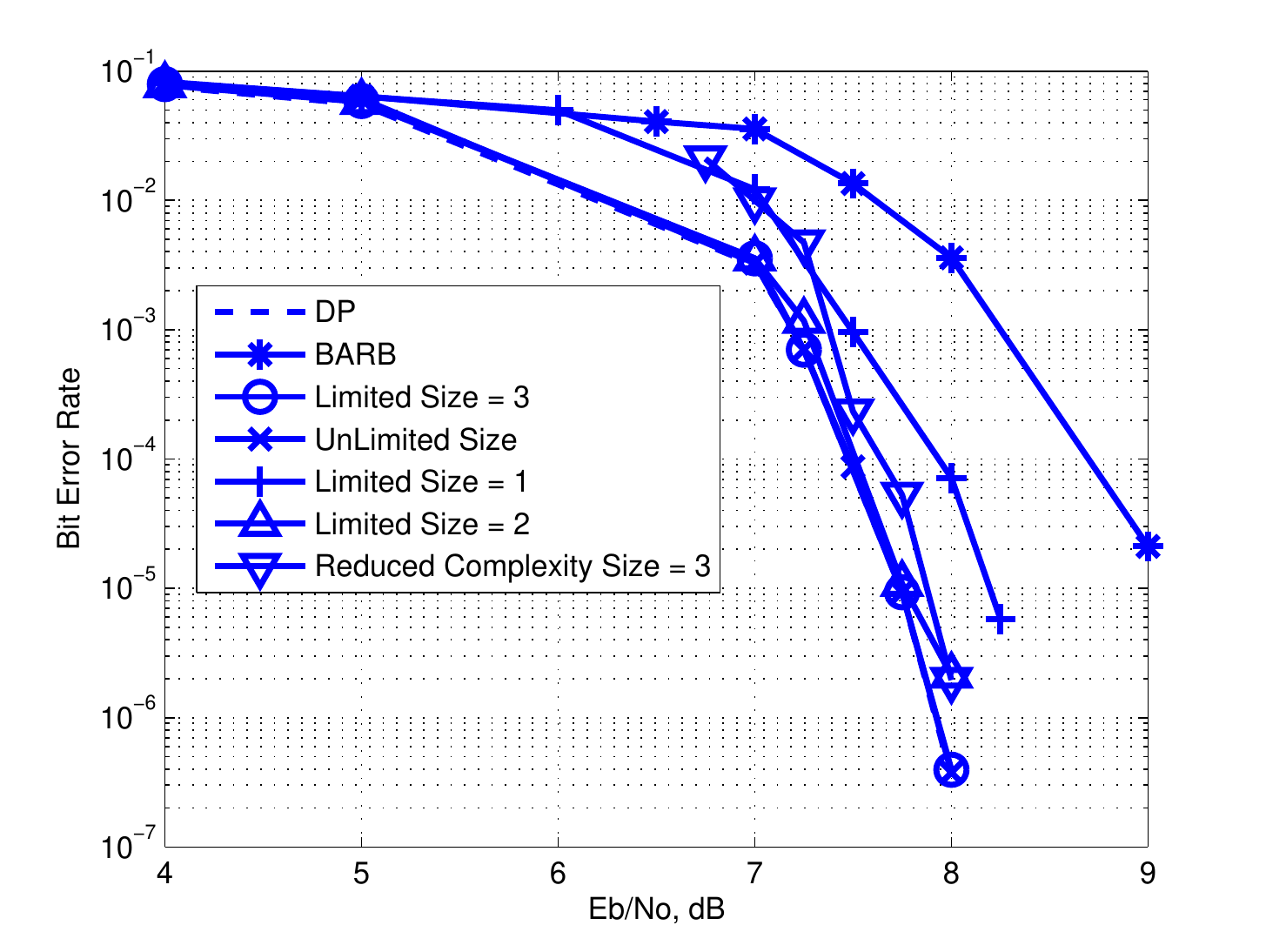}\\
  \caption{Bit Error Rate error rate - 8PSK , $\sigma_{\Delta}=0.05$, Pilot Frequency $= 0.05$}\label{fig:ber8psk}
\end{figure}

\begin{figure}
  \centering
  \includegraphics[width=8.4cm]{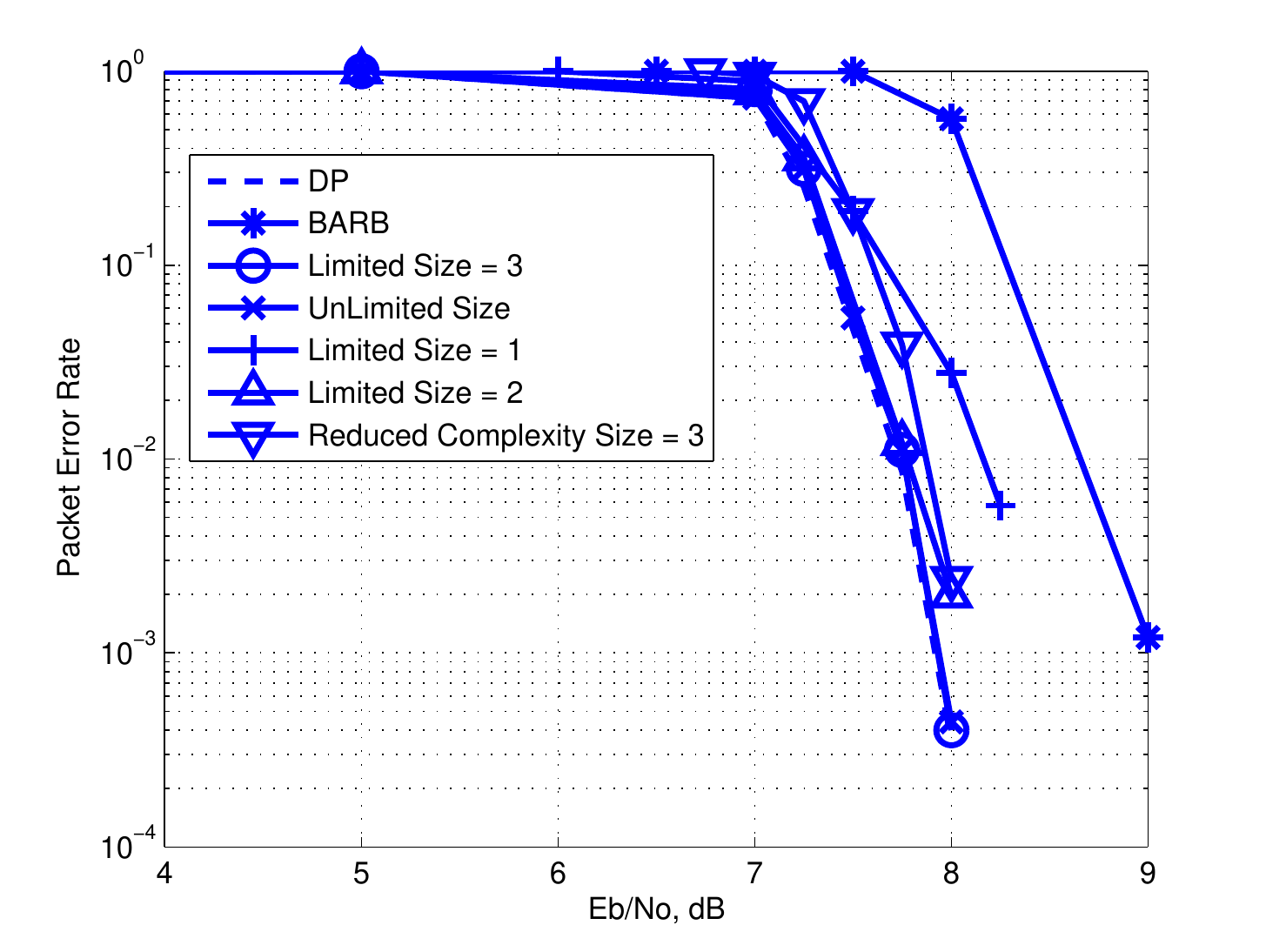}\\
  \caption{Packet Error Rate error rate - 8PSK , $\sigma_{\Delta}=0.05$, Pilot Frequency $= 0.05$}\label{fig:per8psk}
\end{figure}

The algorithms simulated were the unlimited order algorithm, the limited order algorithm with varying
mixture orders (1,2 and 3) and the reduced complexity algorithm of Order 3 (denoted Reduced Complexity Size 3). We
can see that the unlimited mixture, the limited order mixtures of order 2 and 3 and the reduced complexity
algorithm provide almost identical results, which are close to the
performance of the DP algorithm. On the other hand, the BARB algorithm has significant degradation with
respect to all the algorithms. We note that a mixture with only one component can not describe the phase
trajectory well enough to have PER levels like DP, but this algorithm is still better than BARB.\\
In Figs. \ref{fig:perbpsk},\ref{fig:perqpsk} and \ref{fig:per32psk} we show the PER results for a
BPSK,QPSK and 32PSK constellations respectively with the same code used earlier. For the BPSK and QPSK
scenarios we simulated the phase noise using $\sigma_{\Delta}=0.1$[rads/symbol] and for 32PSK we used $\sigma_{\Delta}=0.01$[rads/symbol]. A single pilot was inserted according to the pilot frequency detailed
in each figure's caption.

\begin{figure}
  \centering
  \includegraphics[width=8.4cm]{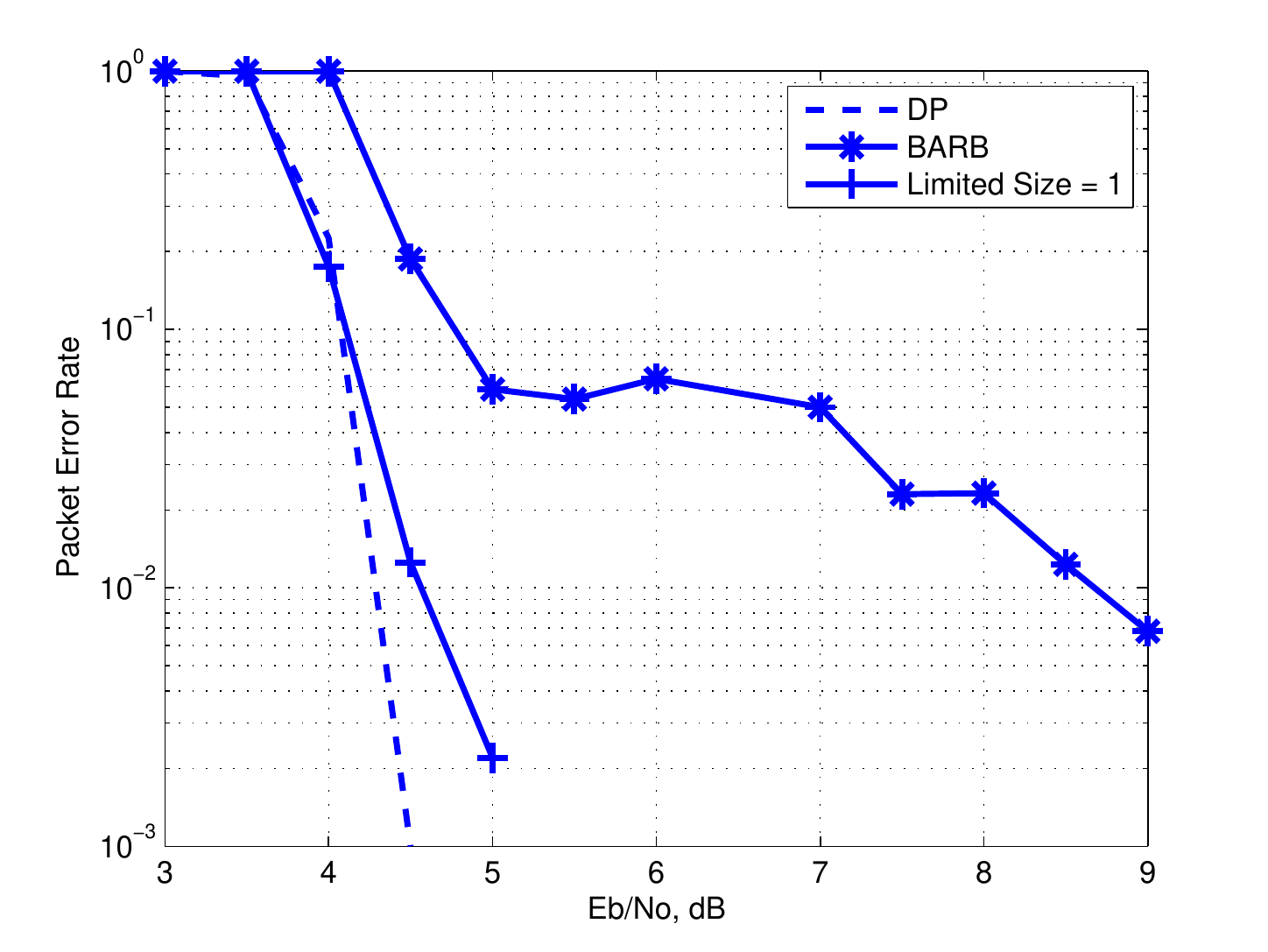}\\
  \caption{Packet Error Rate error rate - BPSK , $\sigma_{\Delta}=0.1$, Pilot Frequency $= 0.0125$}\label{fig:perbpsk}
\end{figure}

\begin{figure}
  \centering
  \includegraphics[width=8.4cm]{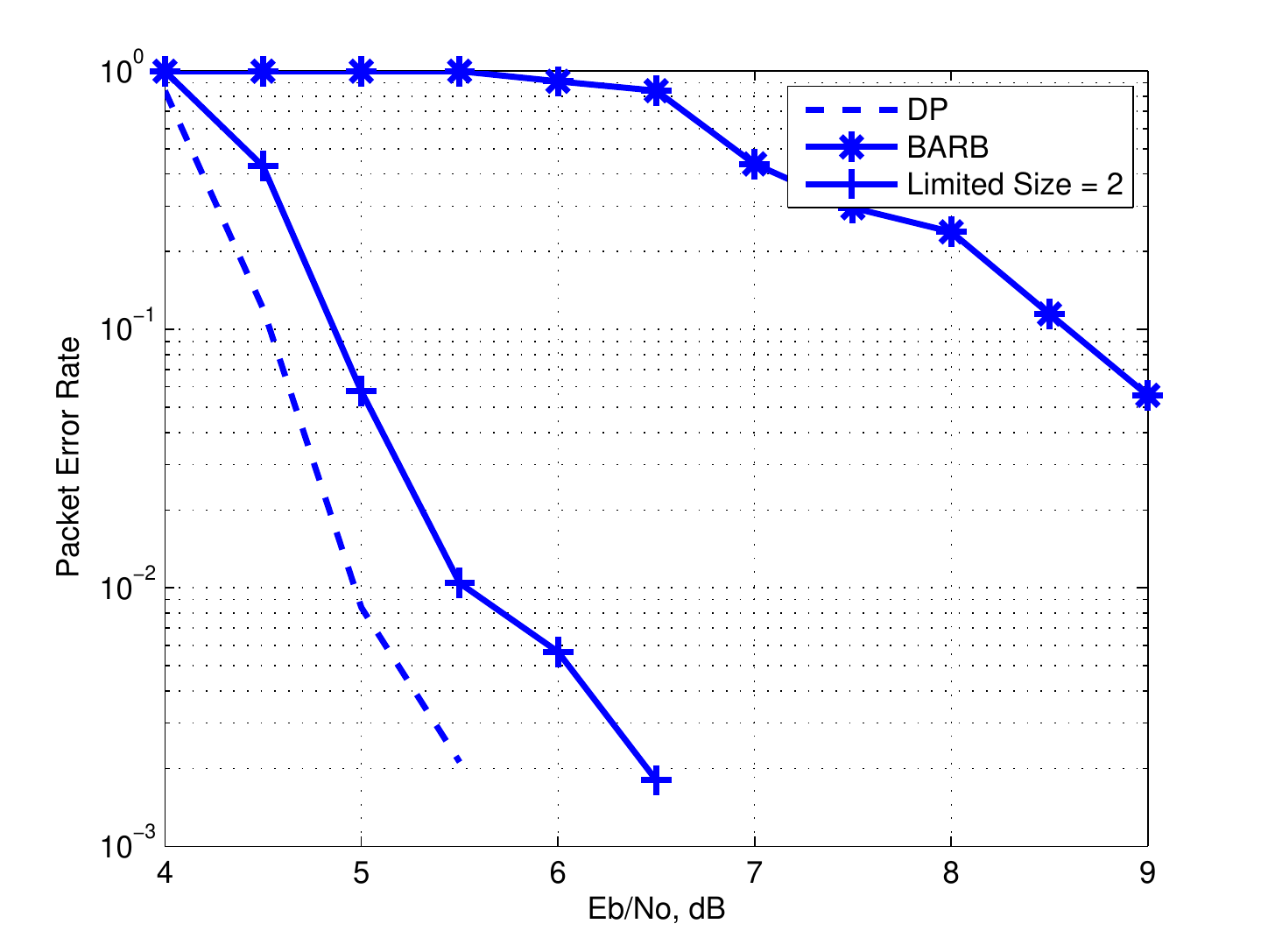}\\
  \caption{Packet Error Rate error rate - QPSK , $\sigma_{\Delta}=0.1$, Pilot Frequency $= 0.05$}\label{fig:perqpsk}
\end{figure}

\begin{figure}
  \centering
  \includegraphics[width=8.5cm]{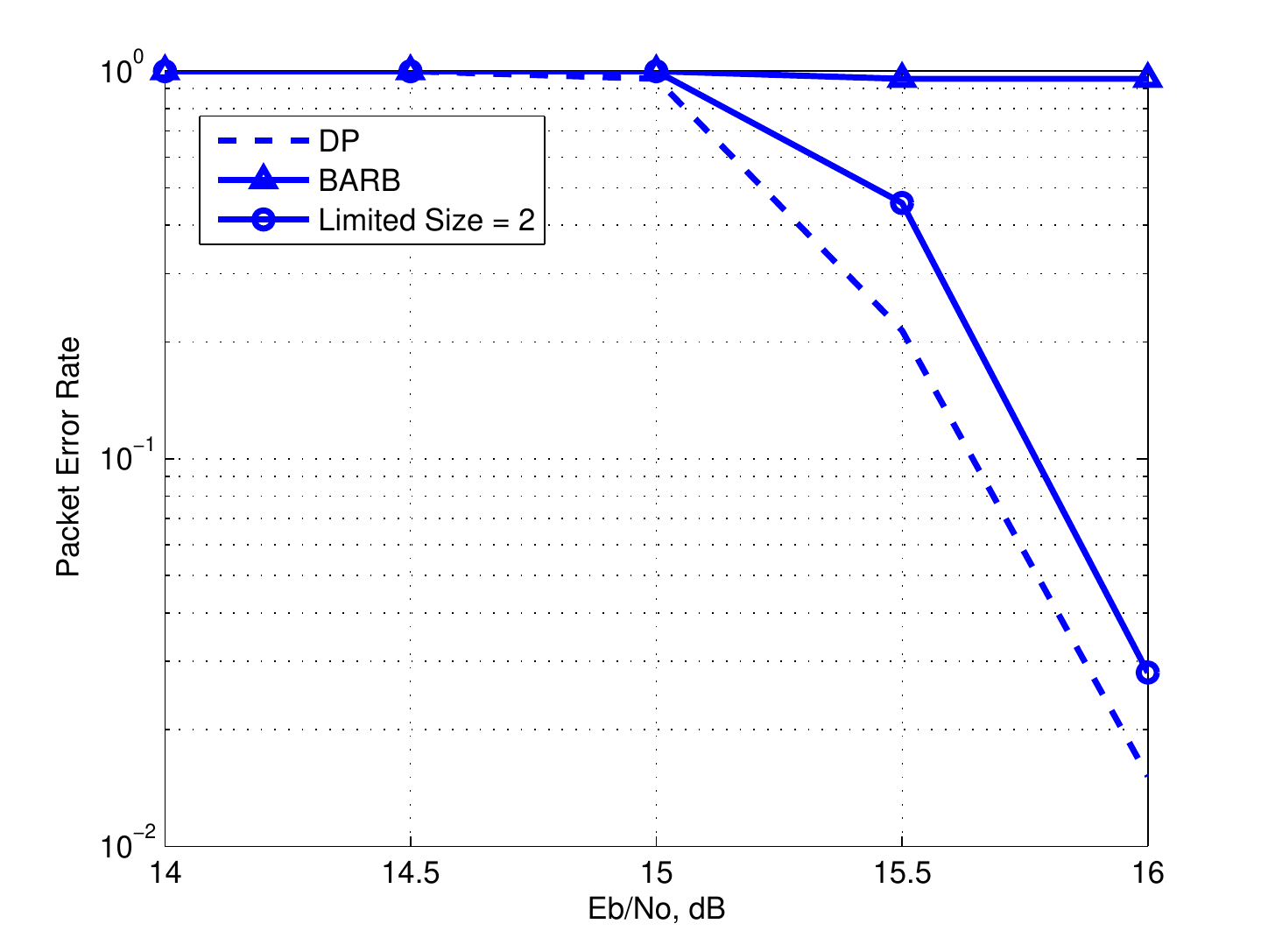}\\
  \caption{Packet error rate - 32PSK, $\sigma_{\Delta}=0.01$, Pilot Frequency $= 0.025$}\label{fig:per32psk}
\end{figure}

We can see that the mixture of order 2 is close to the
performance of the optimal algorithm, even when very few pilots are
present and the code rate and constellation order are high. One should also observe that for the 32PSK
scenario, the BARB algorithm demonstrates a high error floor. This is because of the large phase noise
variance and large spacing
between pilots which causes the SPA messages to become uniform
and thus do not provide information for the LDPC decoder. The high
code rate amplifies this problem. However, the limited algorithm with only one Tikhonov component performs
almost as well as the DP algorithm. This is due to the cycle slip recovery procedure we have presented
earlier which enables the limited algorithm to regain tracking even after missing the correct trajectory.\\
In Fig. \ref{fig:avg_comps8_3lobes_eps4} we present the average number
of mixture components, for different SNR and
LDPC iterations for $\epsilon = 4$. It can be seen that for the first iteration, many components are
needed since there is a high level of phase
ambiguity. As the iterations progress the LDPC decoder sends better soft information for
the code symbols, resolving these ambiguities. Therefore, the average
number of mixture components becomes closer to $1$.

\begin{figure}
  \centering
  \includegraphics[width=8.5cm]{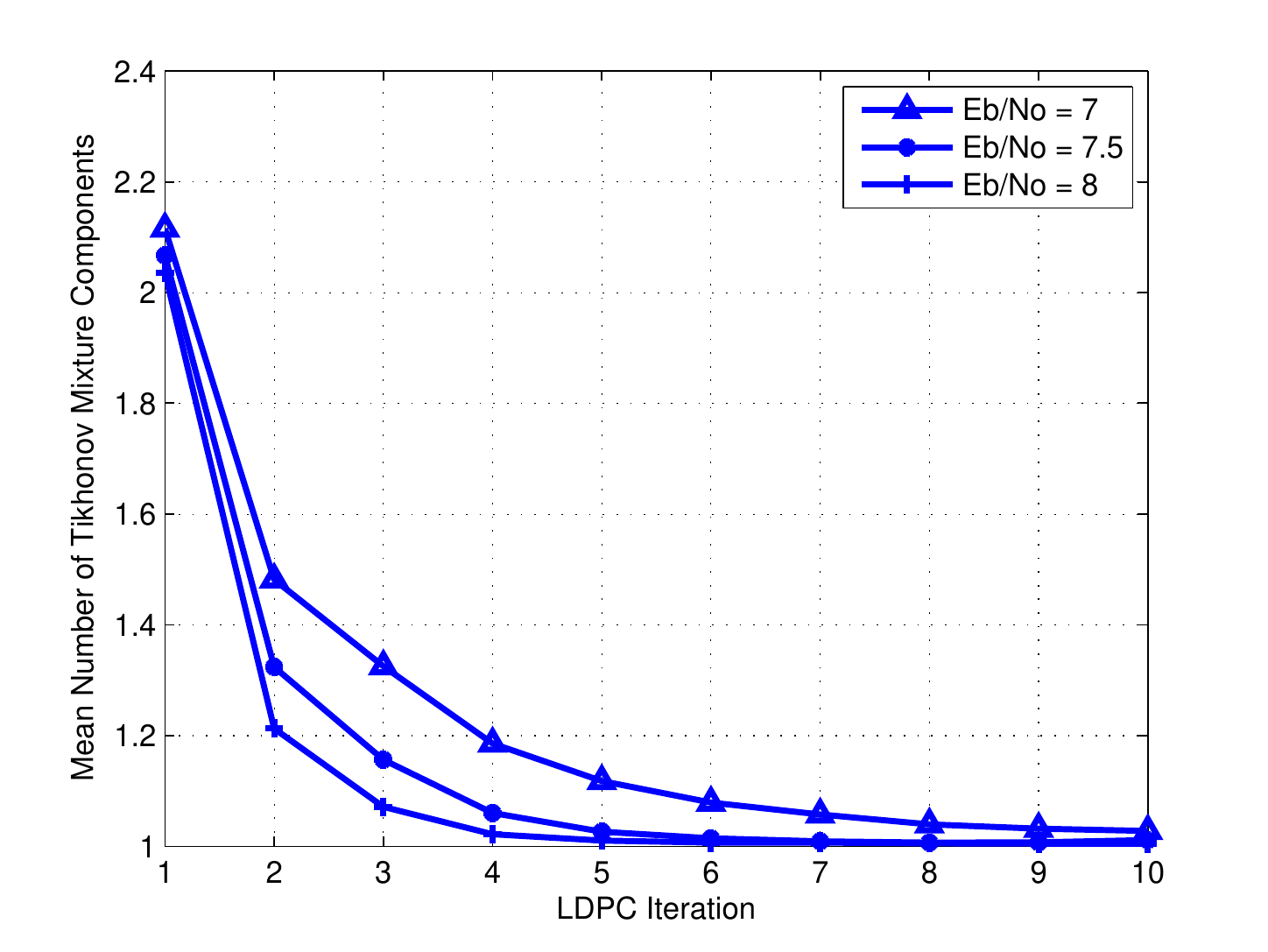}\\
  \caption{8PSK Mean Number of Tikhonov Mixture Components - Full Algorihtm, Maximum 3 lobes}\label{fig:avg_comps8_3lobes_eps4}
\end{figure}

\begin{figure}
  \centering
  \includegraphics[width=8.5cm]{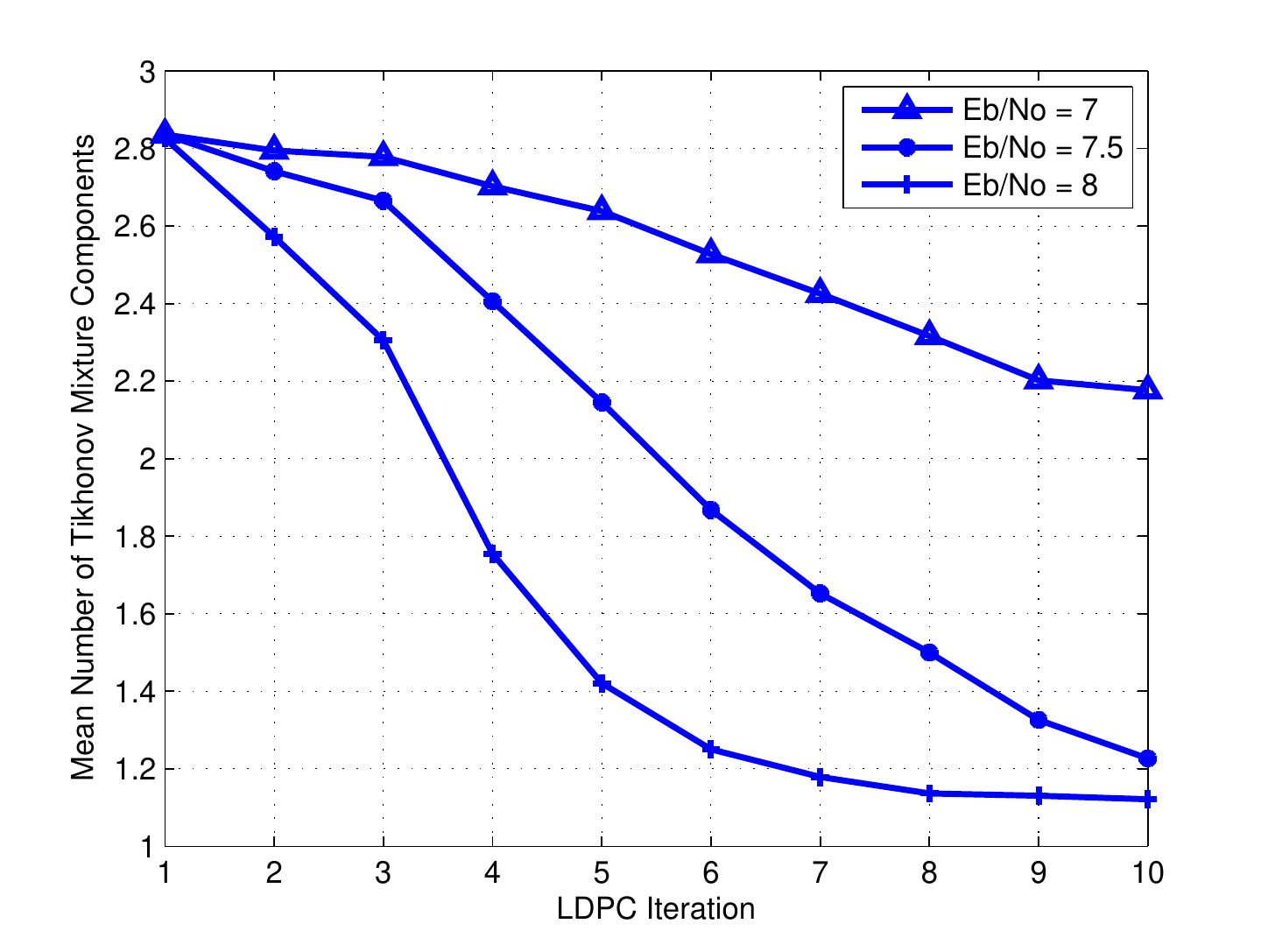}\\
  \caption{8PSK Mean Number of Tikhonov Mixture Components - Reduced Complexity Algorithm, Maximum 3lobes}\label{fig:avg_comps8_3lobes_eps1}
\end{figure}

In Fig. \ref{fig:avg_comps8_3lobes_eps1} we present the average number
of mixture components for the reduced complexity algorithm, for different SNR and
LDPC iterations for $\epsilon = 1$. We chose $\epsilon$ to be lower since we do not use the CMVM operator
as described earlier. As shown in this figure, the mean number of components is larger than for $\epsilon= 4$ but the overall complexity is still manageable.
In Table (\ref{tab:complexity_8psk_reduced}), the computational complexity of the reduced complexity
algorithm is compared to the DP and BARB algorithms. We use the mean mixture in Fig. \ref{fig:avg_comps8_3lobes_eps1} as $\gamma$. We can see that the algorithms proposed in this contribution,
have extremely less computational complexity than DP, while having comparable PER levels to it.

\begin{table*}[!ht] 
\caption{Simulation Results - Computational load per code symbol for 8PSK constellation at $\frac{E_{b}}{N_{0}}=8dB$}  
\centering  
\begin{tabular}{p{2cm}|p{3cm}|p{3cm}|p{3cm}}                       
 Algorithm &DP & BARB & Reduced Complexity, Order 3\\
 \hline
 Iteration & Constant for all iterations &Constant&1 2 3 4
\\
\hline\hline             
MULS & $68360$& $61$& $312$ $292$ $273$ $238$\\

LUT & $128$& $24$& $147$ $134$ $123$ $102$\\
\hline                          
\end{tabular}
\label{tab:complexity_8psk_reduced}
\end{table*}

It should be noted, that the PER performance of the Unlimited algorithm, for small enough $\epsilon$, is
as good as the PER performance of the DP algorithm because the mixture algorithm tracks all the
significant trajectories with no limit on the mixture order. The choice of the threshold $\epsilon$ in the
algorithm is according to the level of distortion allowed for the reduced mixture with respect to the
original mixture. If $\epsilon$ is very close to zero, then there will not be any components close enough
and the mixture will not be reduced. Therefore, there is a tradeoff between complexity and accuracy in the
selection of this parameter. This tradeoff is illustrated in Fig. \ref{fig:avg_comps8_eps_1_4}, where we
have plotted the mean mixture order for the unlimited algorithm using $\epsilon=1$ and $\epsilon=4$. It
should be noted that for these values and chosen SNRs, the unlimited algorithm has the same PER levels for
both $\epsilon$. However, choosing $\epsilon=15$ with the same algorithm will increase the PER. Therefore,
choosing the threshold too low might increase the mixture order with no actual need.
\begin{figure}
  \centering
  \includegraphics[width=8.5cm]{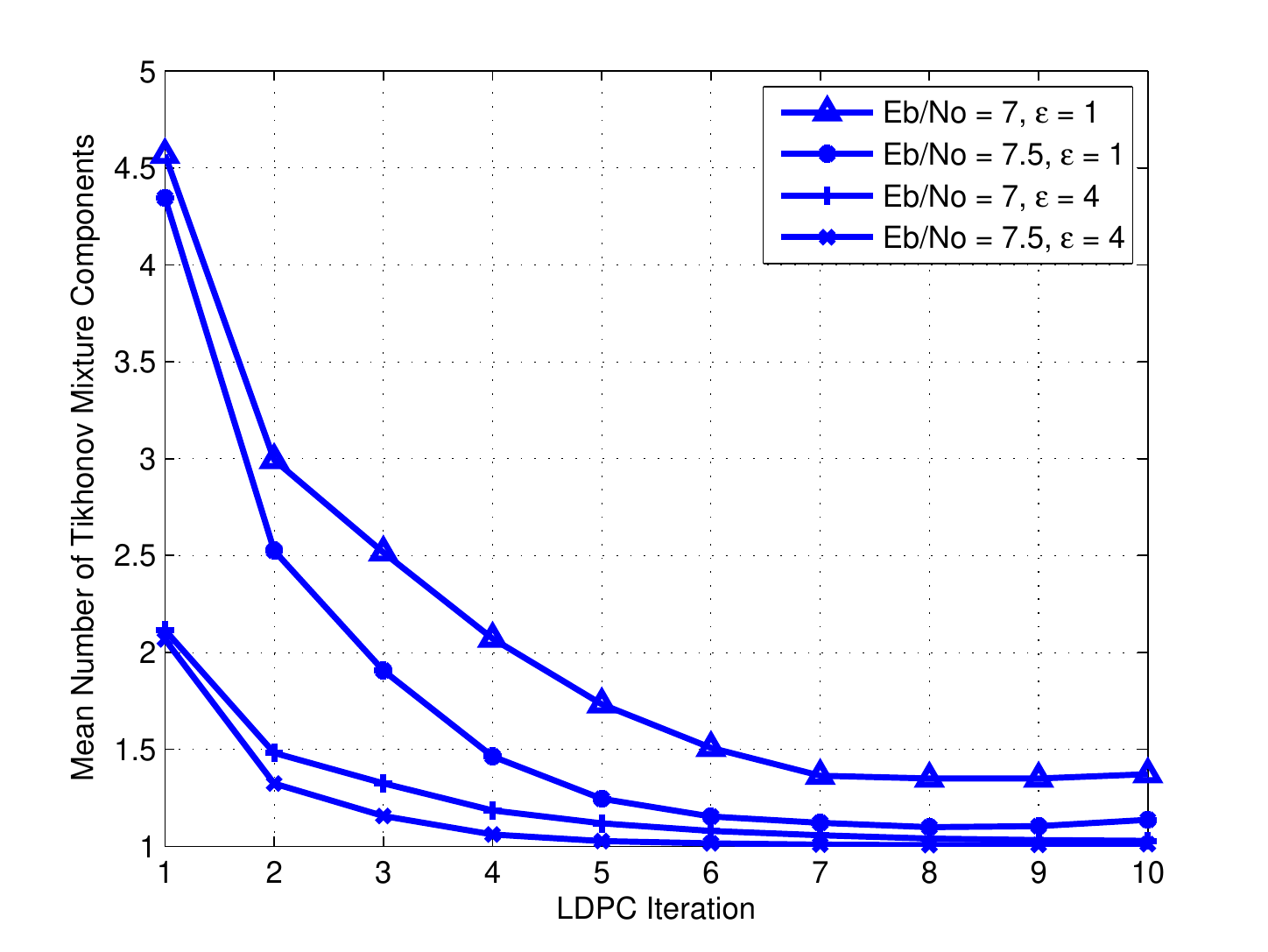}\\
  \caption{8PSK Mean Number of Tikhonov Mixture Components - Unlimited Algorithm}\label{fig:avg_comps8_eps_1_4}
\end{figure}

\section{Discussion}
In this paper we have presented a new approach for joint decoding and estimation of LDPC coded
communications in phase noise channels. The proposed algorithms are based on the approximation of SPA
messages using Tikhonov mixture
canonical models. We have presented an innovative approach for  mixture dimension reduction which keeps
accuracy levels high and is low complexity. The decoding scheme proposed in this contribution is shown via
simulations to significantly reduce the computational complexity of the best known decoding algorithms,
while keeping PER levels very close to the optimal algorithm (DP).
Moreover, we have presented a new insight to the underlying dynamics of phase noise estimation using
Bayesian methods. We have shown that the estimation algorithm can be viewed as trajectory tracking, thus
enabling the development of the mixture reduction and clustering algorithms which can be viewed as PLLs.

\appendices
\section{Proof of the CMVM Theorem}
\label{sec:cmvm_thr}
Let $f(\theta)$ be any circular distribution defined on $[0,2\pi)$ and
$g(\theta)$ a Tikhonov distribution.
\begin{equation}\label{22}
   g(\theta) = \frac{e^{Re[\kappa e^{-j(\theta-\mu)}]}}{2\pi I_{0}(\kappa)}
\end{equation}

We wish to find,
\begin{equation}\label{KL1}
    [\mu^{*},\kappa^{*}] = arg\min_{\mu,\kappa} D_{KL}(f||g)
\end{equation}

According to the definition of the KL divergence,
\begin{equation}\label{23}
   D_{KL}(f||g) = -h(f)-\int^{2\pi}_{0}f(\theta)\log g(\theta)d\theta
\end{equation}

where the differential entropy of the circular distribution
$f(\theta)$, $h(f)$ does not affect the optimization,
\begin{equation}\label{KL2}
    [\mu^{*},\kappa^{*}] = arg\max_{\mu,\kappa}\int^{2\pi}_{0}f(\theta)\log g(\theta)d\theta
\end{equation}
After the insertion of the Tikhonov form into (\ref{KL2}), we get
\begin{equation}\label{KL3}
    [\mu^{*},\kappa^{*}] = arg\max_{\mu,\kappa}\int^{2\pi}_{0}f(\theta)Re[\kappa e^{-j(\theta-\mu)}]d\theta\ -\log{2\pi I_{0}(\kappa)}
\end{equation}

Rewriting (\ref{KL3}) as an expectation and maximizing over $\mu$ only,
\begin{equation}\label{24}
   \mu^{*} = arg\max_{\mu} \kappa \mathbb{E}(Re[ e^{-j(\theta-\mu)}])
\end{equation}

Using the linearity of the expectation and real operators,
\begin{equation}\label{KL4}
    \mu^{*} = arg\max_{\mu} \kappa Re[\mathbb{E}(e^{j(\theta-\mu)})]
\end{equation}
We can view (\ref{KL4}) as an inner product operation and therefore,
the maximal value of $\mu$ is obtained, according to the
Cauchy-Schwartz inequality, for
\begin{equation}\label{25}
   \mu^{*} = \angle{\mathbb{E}(e^{j(\theta)})}
\end{equation}

Now we move on to finding the optimal $\kappa$, using the fact that we
found the optimal $\mu$.
For $\mu^{*}$, the optimal $g(\theta)$ needs to satisfy
\begin{equation}\label{diff0}
    \frac{\partial D(f||g)}{\partial \kappa} = 0
\end{equation}
After applying the partial derivative to (\ref{KL3}), and using
\begin{equation}\label{26}
   \frac{dI_{0}(\kappa)}{d\kappa}\ = \frac{I_{1}(\kappa)}{I_{0}(\kappa)}
\end{equation}

We get,
\begin{equation}\label{27}
   \mathbb{E}(Re[e^{-j(\theta-\mu^{*})}]) =
\frac{I_{1}(\kappa^{*})}{I_{0}(\kappa^{*})}
\end{equation}

Recalling (\ref{circ_mu}) and (\ref{circ_var}), we get that the
optimal Tikhonov distribution $g(\theta)$ is given by matching its
circular mean and variance to the circular mean and circular variance
of the distribution $f(\theta)$.
\qed

\section{Using The \textsf{CMVM} Operator to Cluster Tikhonov Mixture Components}
\label{sec:cmvm_exmp}
In algorithms 1 \& 2, at each clustering iteration, a set $J$ of mixture components indices of the input
Tikhonov mixture (\ref{in}) is selected. The corresponding mixture components are clustered using the CMVM
operator. In this appendix we will explicitly compute the application of the CMVM operator and introduce
several approximations to speed up the computational complexity.
For simplicity, assume that the mixture components in the set $J$ are,
\begin{equation}\label{mix_tikh_11}
    f^{J}(\theta_{k})=\sum_{l \in J}^{|J|}\alpha_{l}\frac{e^{Re[Z_{l}e^{-j\theta_{k}}]}}{2\pi I_{0}(|Z_{l}|)}
\end{equation}

Using Theorem (\ref{mix_tikh_thr}) and skipping the algebraic details,
the \textsf{CMVM} operator for (\ref{mix_tikh_11}), is:
\begin{equation}\label{cmvm_res}
    \textsf{CMVM}(f^{J}(\theta_{k})) =\frac{e^{Re[Z^{f}_{k}e^{-j\theta_{k}}]}}{2\pi I_{0}(|Z^{f}_{k}|)}
\end{equation}

where
\begin{equation}\label{Zf}
    Z_{k}^{f} = \hat{k}e^{j\hat{\mu}}
\end{equation}

and
\begin{equation}\label{tikh_modify_eqs_mu}
    \hat{\mu} = \arg{{\sum_{l \in J}^{|J|}\alpha_{l}\frac{I_{1}(|Z_{l}|)}{I_{0}(|Z_{l}|)}e^{j\arg(Z_{l})}}}
\end{equation}
\begin{equation}\label{tikh_modify_eqs_k}
    \frac{1}{2 \hat{k}} = 1 - \sum_{l \in J}^{|J|}\alpha_{l}\frac{I_{1}(|Z_{l}|)}{I_{0}(|Z_{l}|)}Re[e^{j(\hat{\mu}-arg(Z_{l}))}]
\end{equation}

Since implementing a modified bessel function is computationally prohibitive, we present the following

approximation,
\begin{equation}\label{assumps3}
    \log(I_{0}(k)) \approx k-\frac{1}{2}\log(k)-\frac{1}{2}\log(2\pi)
\end{equation}
which holds for $k>2$, i.e. reasonably narrow distributions.

Using the following relation,
\begin{equation}\label{assumps1}
    I_{1}(x) = \frac{dI_{0}(x)}{dx}
\end{equation}

We find that,
\begin{equation}\label{fracI1I0}
    \frac{I_{1}(k)}{I_{0}(k)} = \frac{d}{dk}(\log(I_{0}(k)))
\end{equation}
Therefore
\begin{equation}\label{eq5}
    \frac{I_{1}(k)}{I_{0}(k)} \approx 1-\frac{1}{2k}
\end{equation}
Thus, the approximated versions of (\ref{tikh_modify_eqs_k}) and (\ref{tikh_modify_eqs_mu}) are
\begin{equation}\label{tikh_modify_eqs_mu_simp}
    \hat{\mu} = \arg[{{\sum_{l \in J}^{|J|}\alpha_{l}(1-\frac{1}{2|Z_{l}|})e^{j\arg(Z_{l})}}}]
\end{equation}
\begin{equation}\label{tikh_modify_eqs_k_simp}
    \frac{1}{2 \hat{k}} = 1 - \sum_{l \in J}^{|J|}\alpha_{l}(1-\frac{1}{2|Z_{l}|})\cos(\hat{\mu}-\arg(Z_{l}))
\end{equation}
We also use the approximation for the modified bessel function in the
computation of $\alpha_{l}$.
For a small enough $\epsilon$, $\cos(\hat{\mu}-\arg(Z_{l})) \approx 1$, thus one can further reduce the
complexity of (\ref{tikh_modify_eqs_k_simp})

\begin{equation}\label{tikh_modify_eqs_k_simp1}
    \frac{1}{\hat{k}} = \sum_{l \in J}^{|J|}\alpha_{l}\frac{1}{|Z_{l}|}
\end{equation}
which coincides with the computation of a variance of a Gaussian mixture.

\section{Computation of the KL divergence between two Tikhonov Distributions}
\label{sec:kl_comp}
In this section we will provide the computation of the KL divergence between two Tikhonov distributions,
which is a major part of both mixture reduction algorithms. We will also provide approximations used to
better the computational complexity of this computation.
Suppose two Tikhonov distributions $g_{1}(\theta)$ and $g_{2}(\theta)$, where

\begin{equation}\label{g1}
g_{1}(\theta) =  \frac{e^{Re[z_{1}e^{-j\theta}]}}{2\pi I_{0}(|z_{1}|)}
\end{equation}

\begin{equation}\label{g2}
g_{2}(\theta) =  \frac{e^{Re[z_{2}e^{-j\theta}]}}{2\pi I_{0}(|z_{2}|)}
\end{equation}

We wish to compute the following KL divergence,
\begin{equation}\label{28}
   D_{KL}(g_{1}(\theta) || g_{2}(\theta))
\end{equation}

which is,
\begin{equation}\label{29}
  D_{KL} = \int^{2\pi}_{0}g_{1}(\theta) \log(\frac{e^{Re[z_{1}e^{-j\theta}]}I_{0}(|z_{2}|)}{e^{Re[z_{2}e^{-j\theta}]}I_{0}(|z_{1}|)})d\theta
\end{equation}

Thus,
\begin{equation}\label{30}
  D_{KL} = \log(\frac{I_{0}(|z_{2}|)}{I_{0}(|z_{1}|)}) + \int^{2\pi}_{0}g_{1}(\theta) Re[z_{1} - z_{2}e^{-j\theta}]d\theta
\end{equation}

After some algebraic manipulations, we get
\begin{multline}\label{31}
  D_{KL} = \log(\frac{I_{0}(|z_{2}|)}{I_{0}(|z_{1}|)}) + \\ \frac{I_{1}(|z_{1}|)}{I_{0}(|z_{1}|)}(|z_{1}|-|z_{2}|cos(\angle z_{1} - \angle z_{2}))
\end{multline}

Using (\ref{eq5}) and (\ref{assumps3}) we get
\begin{multline}\label{kl_comp_full}
  D_{KL} \approx |z_{2}|(1-cos(\angle z_{1} - \angle z_{2}))- \\ \frac{1}{2}\log(\frac{|z_{2}|}{|z_{1}|})+\frac{|z_{2}|}{2|z_{1}|}cos(\angle z_{1} - \angle z_{2})
\end{multline}

\section{Proof of Mixture Reduction as Multiple PLLs}
\label{sec:pll}
In this section we will prove the claim presented in section \ref{sec:phase_tracking}, that under certain
channel conditions, the mixture reduction algorithms can be viewed as multiple PLLs tracking the different
phase trajectories. For reasons of simplicity, will only show the case where the mixture reduction
algorithm converges to a single PLL (the generalization for more than one PLL is trivial, as long as there
are no splits).
As described earlier, we model the forward messages as Tikhonov mixtures. Suppose the $m^{th}$ component
is,

\begin{equation}\label{pf_k}
    p^{m}_{f}(\theta_{k-1}) = \frac{e^{Re[z^{k-1,f}_{m} e^{-j\theta_{k-1}}]}}{2\pi
I_{0}(|z^{k-1,f}_{m}|)}
\end{equation}

then using (\ref{pf}), we get a Tikhonov mixture $f(\theta_{k})$,
\begin{equation}\label{pf_k_mul_pd}
    f(\theta_{k}) = \sum_{i=1}^{M}\alpha_{i}f_{i}(\theta_{k})
\end{equation}

where,
\begin{equation}\label{z_k_1}
f_{i}(\theta_{k}) = \frac{e^{Re[\tilde{z}^{k-1,f}_{m,i}e^{-j\theta_{k}}]}}{2\pi
I_{0}(|\widetilde{z}^{k-1,f}_{m,i}|)}
\end{equation}

\begin{equation}\label{tmp_z}
\tilde{z}^{k-1,f}_{m,i} = \frac{(z^{k-1,f}_{m} +
\frac{r_{k-1}x_{i}^{*}}{\sigma^2})}{1+\sigma^{2}_{\Delta}|(z^{k-1,f}_{m} +
\frac{r_{k-1}x_{i}^{*}}{\sigma^2})|}
\end{equation}

and $x_{i}$ is the $i^{th}$ constellation symbol.
We insert (\ref{pf_k_mul_pd}) into the mixture
reduction algorithms. Assuming slowly varying phase noise and high SNR, such that the mixture
reduction will cluster all the mixture components, with non negligible
probability, to one Tikhonov distribution. Then, the \emph{circular} mean, $\hat{\theta}_{k}$, of the
clustered
Tikhonov distribution is computed according to,

\begin{equation}\label{mu_k_1}
    \hat{\theta}_{k} = \angle \mathbb{E} (e^{j\theta_{k}})
\end{equation}

where the expectation is over the distribution $f(\theta_{k})$.
We note that for every complex valued scalar $z$, the following holds
\begin{equation}\label{im_log}
    \angle{z} = \Im(\log{z})
\end{equation}

where $\Im$ denotes the imaginary part of a complex scalar.
If we apply (\ref{im_log}) to (\ref{mu_k_1}) we get,
\begin{equation}\label{im_log1}
    \hat{\theta}_{k} =  \Im{\left(\log{\sum_{i=1}^{M}\alpha_{i}\frac{\widetilde{z}^{k-1,f}_{m,i}}{|\widetilde{z}^{k-1,f}_{m,i}|}}\right)}
\end{equation}

which can be rewritten as,
\begin{equation}\label{im_log2}
    \hat{\theta}_{k} =  \Im{\left(\log{\sum_{i=1}^{M}\alpha_{i}\frac{z^{k-1,f}_{m} +
\frac{r_{k-1}x_{i}^{*}}{\sigma^2}}{|{z^{k-1,f}_{m} +
\frac{r_{k-1}x_{i}^{*}}{\sigma^2}}|}}\right)}
\end{equation}

we denote,
\begin{equation}\label{33}
  G_{k-1} = |z^{k-1,f}_{m} +
\frac{r_{k-1}x_{i}^{*}}{\sigma^2}|
\end{equation}

and assume that $G_{k-1}$, the conditional causal MSE of the
phase estimation under mixture component $f_{i}(\theta_{k})$, is constant for all significant components.
Then,
\begin{equation}\label{34}
  \hat{\theta}_{k} \approx  \hat{\theta}_{k-1} +
\Im\left(\log\left({\sum_{i=1}^{M}\alpha_{i}\left(1+\frac{r_{k-1}x_{i}^{*}}{G_{k-1}z^{k-1,f}_{m}\sigma^2}\right)}\right)\right)
\end{equation}

where,
\begin{equation}\label{theta}
  \hat{\theta}_{k-1} = \angle z^{k-1,f}_{m}
\end{equation}

\begin{equation}\label{35}
  \hat{\theta}_{k} \approx  \hat{\theta}_{k-1} +
\Im\left(\log\left({1+\frac{r_{k-1}}{G_{k-1}z^{k-1,f}_{m}\sigma^2}\left(\sum_{i=1}^{M}\alpha_{i}x_{i}^{*}\right)}\right)\right)
\end{equation}

We will define $c_{soft}$ as the soft decision symbol using the significant components,
\begin{equation}\label{36}
  c_{soft} = \sum_{i=1}^{M}\alpha_{i}x_{i}
\end{equation}

Since we assume high SNR and small phase noise variance, then the
tracking conditional MSE will be low, i.e $|z^{k,f}_{1}|$ will be high.
Using the fact that for small angles $\phi$,

\begin{equation}\label{37}
 \angle(1+\phi) \approx \Im(\phi)
\end{equation}
Therefore,
\begin{equation}\label{log_simp}
    \hat{\theta}_{k} \approx  \hat{\theta}_{k-1} +
\Im(\frac{r_{k-1}c^{*}_{soft}}{G_{k-1}z^{k-1,f}_{m}\sigma^2})
\end{equation}

Which, again for small angles $x$, $sin(x) \approx x$,
\begin{equation}\label{pll_equiv}
    \hat{\theta}_{k} \approx  \hat{\theta}_{k-1} +
\frac{|r_{k-1}||c^{*}_{soft}|}{G_{k-1}|z^{k-1,f}_{m}|\sigma^2}(\angle{r_{k-1}}+\angle{c^{*}_{soft}}-\hat{\theta}_{k-1})
\end{equation}













\bibliographystyle{plain}
\bibliography{strings}

\end{document}